\newtheorem{lemma}{Lemma}
\newtheorem{theorem}{Theorem}
\newtheorem{definition}{Definition}
\begin{document}
\title{On the Pricing of Recommendations and Recommending Strategically\footnote{This work was conducted as part of a EURYI scheme award. See {\protect \url{http://www.esf.org/euryi/}} for details.}}
%
%
\author[,1]{Paul D\"utting\footnote{Email: {\protect \url{paul.duetting@epfl.ch}}}}
\affil[1]{Ecole Polytechnique F\'ed\'erale de Lausanne, Lausanne, Switzerland}
\author[,2,3]{Monika Henzinger\footnote{Email: {\protect\url{monika.henzinger@univie.ac.at}}}}
\affil[2]{Google Switzerland, Z\"urich, Switzerland}
\affil[3]{University of Vienna, Vienna, Austria}
\author[,4]{Ingmar Weber \footnote{Email: {\protect \url{ingmar@yahoo-inc.com}}}}
\affil[4]{Yahoo! Research, Barcelona, Spain}

\date{\today}

\maketitle
\begin{abstract}
If you recommend a product to me and I buy it, how much should you be paid by the seller? And if your sole interest is to maximize the amount paid to you by the seller for a sequence of recommendations, how should you recommend optimally if I become more inclined to ignore you with each irrelevant recommendation you make? Finding an answer to these questions is a key challenge in all forms of marketing that rely on and explore social ties; ranging from personal recommendations to viral marketing.

In the first part of this paper, we show that there can be no pricing mechanism that is ``truthful'' with respect to the seller, and we use solution concepts from coalitional game theory, namely the Core, the Shapley Value, and the Nash Bargaining Solution, to derive provably ``fair'' prices for settings with one or multiple recommenders. We then investigate pricing mechanisms for the setting where recommenders have different ``purchase arguments''. Here we show that it might be beneficial for the recommenders to withhold some of their arguments, unless anonymity-proof solution concepts, such as the anonymity-proof Shapley value, are used.

In the second part of this paper, we analyze the setting where the recommendee loses trust in the recommender for each irrelevant recommendation. Here we prove that even if the recommendee regains her initial trust on each successful recommendation, the expected total profit the recommender can make over an infinite period is bounded. This can only be overcome when the recommendee also incrementally regains trust during periods without any recommendation. Here, we see an interesting connection to ``banner blindness'', suggesting that showing fewer ads can lead to a higher long-term profit.


%
%
%

\vspace{6pt}\noindent{\bf Keywords:} recommendations, pricing mechanisms, trust loss in advertising, banner blindness
\end{abstract}
\section{Introduction}
Suppose you buy a new mobile phone and, simply because you like it so much, you recommend it to friends, encouraging them to purchase it as well. Even if you do not recommend it out of monetary reasons, what would be an adequate and \emph{fair} price for the phone manufacturer to pay for your recommendation?
If, on the other hand, you recommend a book at Amazon solely due to the monetary incentive given by Amazon's referral scheme\footnote{\url{https://affiliate-program.amazon.com/}} and your friends realize this, then they are likely to lose trust in your recommendations. Assuming your friends regain trust whenever you make a relevant recommendation, how can you maximize your long-term profit, and is this profit bounded or not? These are the two main research questions we address in this work.


The importance of ``word-of-mouth'' (WOM) as a marketing channel has long been known \cite{BR87,HKK91,GCD03}. According to \cite{BR87}, ``WOM was seven times as effective as newspapers and magazines, four times as effective as personal selling, and twice as effective as radio advertising in influencing consumers to switch brands''. WOM is the causal effect behind ``brand congruence'' where friends both in offline \cite{RFBS84} and online \cite{SW09} social networks tend to use the same products. Recently, a platform called {\em Friend Vouch}~\cite{FV08} was founded around the idea of personal recommendations. Users of the service can become ``brand ambassadors'' who get paid for putting companies in touch with friends. Whether any personal touch is retained in such a system or whether the person in the middle is not simply another marketing channel is debatable and in Section \ref{sec:classification} we propose a classification schema to shed light on the differences.

As far as the pricing of recommendations is concerned, one could argue that honest recommendations should always be given without any monetary recompensation and that creating financial incentives could lead to a sell-out of friends. Although this is a valid concern, we argue that it might still be worth paying recommenders, even if these are not asking to be paid.
First, even though you might not be profit-maximizing in a strict sense you are probably more inclined to mention a certain product if there is some kind of recompensation: you might be honest enough not to recommend a bad product over a good one for financial reasons, but you are still more likely to recommend a good product if you get reimbursed.
Second, a \emph{fair} compensation can lead to increased brand loyalty. If you are already satisfied with the product then the feeling that the company recompensates you in a fair and adequate manner is likely to increase your positive attitude towards the company. On the other hand, if you are only offered \$1 for recommending a particular type of sports car then this might be viewed as ``offensive'' and is arguably worse than not being offered any recompensation.

Especially, as the issue of trust is of utmost relevance in the realm of personal recommendations, we believe that ``fair pricing'' is a cornerstone \cite{R04,M08a}.\footnote{Somewhat related is the phenomenon of pay-what-you-like pricing where people act ``irrationally'' and choose to pay an adequate amount \cite{M08b,FN09}.} Google e.g. advertises its Adsense program by claiming to use a Second-Price Auction to eliminate ``that feeling that you've paid too much''\footnote{\url{http://www.google.com/adsense/afs.pdf}}. In a similar spirit, our work on pricing recommendations can be viewed as trying to eliminate ``the feeling that you've been paid too little'' for your recommendation.

The second problem we study, relates to scenarios where the recommender is selfish and only makes paid recommendations to maximize her own profit. Here, in a sense, the friend making the recommendation is no more trustworthy or altruistic than a web search engine showing sponsored search results. In these settings we believe the trust between the recommender and the recommendee to be dissipating. More concretely we assume that with every unsuccessful recommendation the recommendee becomes more and more likely to ignore any ``advice'' given by her friend. We see this as closely related to ``banner blindness'' \cite{BL98,CHN00,BHNG05}, where people have become so overloaded and fed up with advertisement that they stop to notice it completely. Seen from this angle, our findings indicate that advertisers might have to \emph{stop} showing advertisements on a regular basis if they want to retain customers' trust without seeing click-through-rates converge to zero.







%
\subsection{Related Work}

Even though recommendations can be seen as just another form of advertising, classical methods for the pricing of advertising, such as sponsored search auctions \cite{LPSV07}, are not directly applicable. This is mainly due to the fact that a true recommendation should be altruistic and so (i) the recommender is not profit maximizing and (ii) there is only a \emph{single} seller, as an altruistic recommender will not accept ``bids'' from multiple sellers. The differences between various kinds of advertising are described in 
Section \ref{sec:classification}.

The work that is most closely related to our paper is \cite{AMSX09}. There the authors study the sales price of an object as part of a viral marketing campaign. They assume that all ``converted'' nodes will try to convert all of their neighbors and that the conversion probability depends both on the number of neighbors converted and on the sales price. They do \emph{not} consider the problem of how the recommendation itself should be rewarded. In fact, they mention the problem of finding optimal ``cashbacks'' in settings where the nodes behave strategically as an open problem.

The problem of optimal pricing with non-social recommender systems, where the recommendations directly come from the potential seller, was studied in \cite{BO06}. Here by ``non-social'' we mean ``computer-generated'' and a typical example would be Amazon's ``Customers who bought X also bought Y''\footnote{\url{http://www.amazon.com}}. The somewhat surprising argument is that customers are willing to \emph{pay} for relevant recommendations as they create ``value by reducing product uncertainty for the customers''. In this paper, we consider the case where the recommendations are social and do not come from the seller directly. Though it is imaginable that the recommendee pays the recommender for a good recommendation, we do not investigate the pricing of this possible payment.

It should be clear that we are \emph{not} addressing the problem of \emph{what} to recommend, a problem typically encountered by stores such as Amazon and usually solved using ``collaborative filtering'' techniques \cite{SM95,HKTR04}. In the first part of this paper (Section~\ref{sec:pricing}), we assume that the recommender recommends an item because she believes this item to be of interest to the recommendee, and the algorithm used by her to determine potential interest is irrelevant. In the second part (Section~\ref{sec:recommending-strategically}), the recommender is profit maximizing and now only cares about the reward offered to her by the seller and the probability $p$ that the recommendee will buy the item. In this model the ``what'' is absorbed into $p$ and the recommender simply decides on \emph{when} to recommend. 

We are also \emph{not} addressing the topic of how rumors spread through social networks, or how to identify the best nodes to target for a viral marketing campaign \cite{DR01,KKT03}. Our work focuses on a single atomic link in the corresponding cascades of conversions and, in the first part, we ask what a fair price should be to pay a node for activating one of her neighbors. In answering this question we limit our attention to the immediate profit of the seller due to the individual sale, and we do \emph{not} consider the additional value due to recommendation cascades caused by the newly activated node. However, given any algorithm to compute this ``higher order'' profit, it can trivially be incorporated into our results.
The question whether a selfish node should actually try to activate her neighbors at all is addressed in Section~\ref{sec:recommending-strategically}.


More generally, in the second part we look at a model where the recommendee loses trust in the recommender, i.e. for each unsuccessful recommendation she becomes less and less inclined to listen to any further suggestion. This is most likely to appear when the recommendee has the feeling that the recommendations are ``dishonest''. How honest recommendations can be ensured when there are several recommenders is studied in \cite{GEB04}. The approach suggested by the authors involves evaluating/ranking recommenders based on the rating given to their recommended items by other people. This motivates recommenders to give good recommendations in a similar way that Ebay's rating system gives incentives for both buyers and sellers ``to behave''. This approach, however, requires a public market where potential buyers can look for recommendations. This is not the setting of personal recommendations considered here.

The problem of trust decay is related to ``banner blindness'' \cite{BL98,CHN00,BHNG05}, where web users become ``blind'' to banner ads due to overexposure. Cast to this setting our mathematical model suggests that, even if web users' interest is ``refreshed'' by a single relevant advertisement that is clicked, the long term profit of advertisers will stagnate as click-through-rates fall to zero. The only possible way out of this dilemma is to \emph{stop} showing banner ads for a while so that users can ``unlearn'' to ignore all advertising. This approach is also suggested in a recent patent \cite{PK08}.

In typical literature on sponsored search auctions \cite{LPSV07,L06} it is assumed that the web search engine is optimizing its expected profit and that its expected profit for showing a particular ad is the ad's click-through-rate (CTR) multiplied by the price the advertiser will be charged when her ad gets clicked. Usually, only a single round is considered or, when there are budget constraints \cite{AMT07,FMPS07}, the CTRs are assumed to be \emph{constant} during the duration of the game. If, however, it is assumed that CTRs drop for \emph{all} ads for each unsuccessful advertisement shown then, in the long run, this puts more emphasis on showing ads with high CTR, regardless of how much their advertisers can be charged for a single click. Although different objective functions for the search engine have been considered \cite{AMT07}, the setting of profit maximization with trust decay has not been studied and we deem this an interesting area for future work.

Finally, there is previous work that is relevant on a more technical level. In particular solution concepts such as the Core \cite{G59,M97} or the Shapley-Value \cite{S53,M97} have been studied extensively before. The exact connection to this group of work will be made clear in the sections with our technical contributions.

\subsection{Classification of Advertising Schemes}\label{sec:classification}
One could argue that a recommendation is, ultimately, just an advertisement and that an advertisement is just a recommendation. To highlight the differences between different kinds of advertisement in general, we present a simple classification scheme.
\begin{enumerate}
\setlength{\itemsep}{0pt}
\item[$\bullet$] Addressing: Personal vs. general. A recommendation is per se more personal than an advertisement and should be adapted to reflect the individual needs and interests of the potential buyer. Classic advertisement is not personalized and uses the same ``message'' for everyone.
\item[$\bullet$] Trust: High vs. low. A recommendation should come from someone the potential buyer trusts and feels loyal or close to. This can be a personal friend or maybe a well-respected blogger. In classic advertisement the information source is viewed as less reputable, though advertisers try to use trusted icons for their purposes.
\item[$\bullet$] Intention: Altruistic vs. commercial. The intention of a recommendation by a friend is generally not commercial. She might not get reimbursed at all but she still recommends something as she believes you would profit from it. In ordinary advertising the reason for the act of advertising itself is a commercial one. 
\end{enumerate}
The first part of this paper (Section~\ref{sec:pricing}) considers the setting of personal, highly trusted and altruistic recommendations. The second part (Section~\ref{sec:recommending-strategically}) then investigates the case of still personal, but commercial recommendations with a decaying amount of trust involved. To demonstrate the general applicability of this schema, we use it to classify a number of different advertising scenarios.
\begin{enumerate}
\setlength{\itemsep}{0pt}
\item Billboards. A chain of pizza restaurants puts up billboards all over the country, without targeting any specific group. Addressing: general, trust: low, intention: commercial.
\item Sponsored search. A web search engine shows targeted sponsored results in addition to ``organic'' web search results, trying to match the searcher's intent. Addressing: personal, trust: low, intention: commercial.
\item Testimonial. You liked a book and you write a testimonial on Amazon to convince other unknown readers to read it, too. Addressing: general, trust: high, intention: altruistic.
\item Direct recommendation. A friend asks you for advice on which laptop to buy and you recommend the model which you believe is best for her. Addressing: personal, trust: high, intention: altruistic.
\end{enumerate}

Of course, there are lots of other important differences, e.g.~concerning the conversion rates, but we view these differences as consequences of the ``axiomatic'' differences above and we assume that a personalized, altruistic ``advertisement'' from a highly trusted source will always have a higher conversion rate than a general, commercial ``recommendation'' from a disreputable source.

%
%
\begin{figure}[h!]
\centering
\epsfig{file=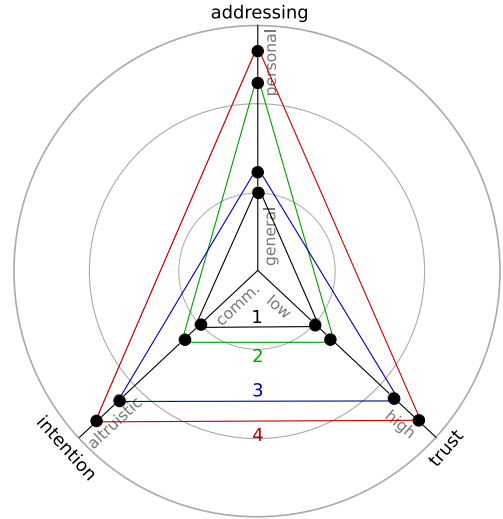,width=6.5cm}
\caption{Visualization of the four advertising schemes discussed in the text. Direct recommendation (\#4) is the most successful advertising medium as it dominates all other schemes in all dimensions.}\label{fig:1}
\end{figure}
\subsection{Our Contributions and Outline}
To the best of our knowledge there has been no work focusing on either (i) the pricing of recommendations (our Section~\ref{sec:pricing}) or (ii) the strategic behavior of recommenders in a setting with decaying trust (our Section~\ref{sec:recommending-strategically}). We view the introduction of these problems as one of our contributions.

As far as the pricing of recommendations is concerned we prove that there can be no pricing mechanism that is ``truthful'' with respect to the seller (Section~\ref{sec:truthful}). This shows that the seller can always pretend to profit less from the recommendations than she actually does to get a larger piece of the pie. We then apply solution concepts from coalitional game theory, namely the Core, the Shapley value, and the Nash Bargaining Solution, to determine provably ``fair'' prices. 
For the Core we find that it typically contains all ``individual rational'' payoff vectors, including the payoff vector where the seller gets everything and the recommenders get nothing. On the one hand, this demonstrates the weakness of the recommenders: They cannot form a coalition with non-zero value without the seller. On the other hand, it shows that the Core is essentially useless for deciding how to distribute the ``extra profit'' the seller can expect from being recommended among the recommenders (Sections~\ref{sec:one-core} and \ref{sec:many-core}). 
For the Shapley value we find that it not only defines unique prices, but that these prices are also ``fair'' in a very intuitive way: The price of a recommendation should be proportional to the ``extra profit'' the seller can expect from it (Section~\ref{sec:one-shapley} and~\ref{sec:many-shapley}). 
For the Nash Bargaining Solution we find that it yields ``fair'' prices, namely those obtained by the Shapley value, only if there is a single recommender. Otherwise, especially in situations where the recommenders do {\em not} contribute equally to the ``extra profit'' of the seller, it may lead to ``unfair'' prices (Section~\ref{sec:one-nash} and \ref{sec:many-nash}).
Finally, we also consider the case where each recommendation consists of one or more ``purchase arguments''. Here the ordinary Shapley value is no longer the method of choice, as withholding arguments might be beneficial for the recommenders. We show how the anonymity-proof Shapley value from~\cite{OCS08} can be applied to overcome this problem (Section~\ref{sec:many-proofshapley}).


In the second part on the strategic behavior of profit maximizing recommenders we first show that, not surprisingly, the total expected profit of the recommender is bounded when the recommendee can only lose and does not regain trust (Section~\ref{sec:wo_reset_wo_recovery}). Then we prove that the total expected profit is still bounded over an infinite (!) sequence of recommendations, even when trust is reset to an initial level on each successful recommendation (Section:~\ref{sec:with_reset_wo_recovery}). Finally, we show that when trust is regained incrementally when no recommendations are made, the recommender's optimal total expected profit is unbounded in the long run and that she can recommend both too aggressively and too passively (Section~\ref{sec:with_reset_with_recovery}). These results are also applicable to the phenomenon of ``banner blindness''.
%
%
\section{The Pricing of Recommendations}\label{sec:pricing}
%
%
%
%
%
We model the pricing of recommendations problem as a {\em coalitional game with transferable payoff} $\langle N,v \rangle$, where $N$ is a finite set (the set of {\em players}) and $v$ is a function that associates with every non-empty subset $S$ of $N$ (a {\em coalition}) a real number $v(S)$ (the {\em worth} of $S$). We use $s$ to denote the {\em seller}, who is paying for recommendations, and $r_i$ to denote the $i${-th} recommender. There is exactly one {\em product} for sale.\footnote{Note that this does {\em not} restrict the generality of our model. It rather says that each recommendation is for a distinct entity that we refer to as a product.} For each coalition $S$ the number $v(S)$ is the total payoff that is available for division among the members of $S$. We use $\delta \ge 0$ to denote the seller's {\em margin} or {\em gain} from selling the product, i.e. the sales price minus the production cost, and distinguish three scenarios for $v$:

%
%
\begin{enumerate}
\setlength{\itemsep}{0pt}
\item[$\bullet$] {\em General.} Without any recommendation the product is sold with probability $p \in [0,1].$ If the recommenders $R \subseteq \{r_1, .., r_n\}$ recommend the product, then the probability that the product is sold is $p+f(\{s\} \cup R)$, where $f: 2^N \rightarrow [0,1-p]$ is an arbitrary function with $f(\{s\}) = 0$.
\end{enumerate}
The following two scenarios are special cases of {\em General}.
\begin{enumerate}
\setlength{\itemsep}{0pt}
\item[$\bullet$] {\em Linear.} Without any recommendation the product is sold with probability $p \in [0,1].$ The recommendation of the $i${-th} recommender increases this probability by $q_i \in [0, 1 - \sum_{j \neq i} q_j].$ The joint effect of more than one recommendation is the sum of the effect of the individual recommendations. Formally, if the recommenders $R \subseteq \{r_1, .., r_n\}$ recommend the product, then the probability is $p+\sum_{i: r_i \in R } q_i.$ 
\item[$\bullet$] {\em Threshold.} If less than $k$ recommenders recommend the product, then the product is sold with probability $p \in [0,1].$ If at least $k$ recommenders recommend the product, then it is sold with probability $p+q$ where $q \in [0,1-p].$ 
\end{enumerate} 
We refer to these scenarios as $\langle N, v \rangle$ (General), $\langle N, v \rangle$ (Linear), and $\langle N, v \rangle$ (Threshold). The following table gives the worth $v(S)$ of all $S \subseteq N = \{s, r_1, r_2\}$ for all three scenarios.
%
\begin{table}[ht!] 
\centering
\setlength{\tabcolsep}{1.5mm}
\begin{tabular}{|c|ccc|}
\hline
$S$         & Linear & Threshold & General \\ 
\hline
$\emptyset$        &0                        &0                      &0                \\
$\{s\}$            &$p \delta$               &$p \delta$             &$p \delta$       \\
$\{r_1\}$          &0                        &0                      &0                \\
$\{r_2\}$          &0                        &0                      &0                \\
$\{s, r_1\}$       &$(p + q_1) \delta$       &$p \delta$             &$(p+f(s,r_1))\delta$   \\
$\{s, r_2\}$       &$(p + q_2) \delta$       &$p \delta$             &$(p+f(s,r_2))\delta$   \\
$\{r_1, r_2\}$     &0                        &0                      &0                \\
$\{s, r_1, r_2 \}$ &$(p + q_1 + q_2) \delta$ &$(p + q) \delta$       &$(p+f(s,r_1,r_2))\delta$  \\
\hline\end{tabular}
\caption{Worths $v(S)$ of all possible coalitions $S$ for one seller $s$ and two recommenders $r_1$ and $r_2$ for our three different models.
}\label{tab:worths}
\end{table}

Our goal is to find a payoff vector $(x_s,x_{r_1}, \dots x_{r_n})$, where $x_s$ denotes the expected payoff to the seller and $x_{r_i}$ denotes the expected payoff to the $i${-th} recommender. Suppose that the seller $s$ is recommended by all recommenders $r_i \in N \setminus \{s\}$, then the worth of this coalition is $v(N) = (p + f(N)) \cdot \delta.$ We say that the payoff vector $(x_s,x_{r_1}, \dots x_{r_n})$ is {\em feasible} if $x_s + \sum_i x_{r_i} = v(N)$. A feasible payoff vector, which prescribes the {\em expected} payoff to each player, can be translated into {\em prices}, i.e. payments from the seller to the recommenders, as follows:
\begin{enumerate}
\setlength{\itemsep}{0pt}
\item {\em Pay-per-Recommendation}: The recommender gets paid by the seller for every recommendation; successful or not. That is, on every recommendation the seller $s$ pays the $i${-th} recommender $r_i$ the money equivalent of $x_{r_i}.$

\item {\em Pay-per-Sale}: The recommender gets paid by the seller for successful recommendations only. That is, on every successful recommendation the seller $s$ pays the $i${-th} recommender $r_i$ the money equivalent of $1/(p+f(N)) \cdot x_{r_i}$. 
\end{enumerate}
%
In practice, the Pay-per-Sale approach might be preferable as, on a successful recommendation, one could reasonably assume $p+f(N)=1$, sidestepping the problem of estimating $f(N)$ with very little or no data. Note that the prior probability $p$ is easier to estimate using the seller's sales record and click-through or conversion-rates.

\subsection{Impossibility Result}\label{sec:truthful}
Ideally, the payoff vector $(x_s, x_{r_1}, \dots, x_{r_n})$ computed by whatever mechanism should give the seller $s$, who holds the private information on $p$, $f$, and $\delta$, the incentive to reveal her information {\em truthfully}. Formally, we want that for all $p'$, $f'$, and $\delta':$ $x_s(p,f,\delta) \ge x_{s}(p',f',\delta')$, where $x_s(p,f,\delta) = (p+f(N)) \cdot \delta - \sum_{i} x_{r_i}(p,f,\delta)$ and $x_{s}(p',f',\delta') = (p+f(N)) \cdot \delta - \sum_{i} x_{r_i}(p',f',\delta').$ Unfortunately, as the following theorem shows, the only truthful payoff vector has $\sum_{i} x_{r_i} = 0$. That is, the seller gets everything and the recommenders get nothing.  
\begin{theorem}\label{the:truthful}
There can be no truthful payoff vector $(x_s,$ $x_{r_1}$, $\dots, x_{r_n})$ that has $\sum_{i} x_{r_i} \neq 0$ and ensures participation of the seller $s$ and the recommenders $r_1$ to $r_n.$ 
\end{theorem}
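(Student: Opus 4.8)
The plan is to show that truthfulness forces the recommenders' payoffs to be constant in the seller's reported type, and then to use the participation (individual rationality) constraints to pin that constant to zero. First I would unpack the definition: the seller's realized payoff is $x_s(p',f',\delta') = (p+f(N))\cdot\delta - \sum_i x_{r_i}(p',f',\delta')$, where the prefactor $(p+f(N))\cdot\delta$ uses the \emph{true} type but the deduction $\sum_i x_{r_i}$ is evaluated at the \emph{reported} type $(p',f',\delta')$. Truthfulness says this is maximized at $(p',f',\delta')=(p,f,\delta)$, which is equivalent to the statement that $(p,f,\delta)$ minimizes $\sum_i x_{r_i}(p',f',\delta')$ over all reports $(p',f',\delta')$. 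Crucially, this must hold for \emph{every} true type $(p,f,\delta)$, so every type is a global minimizer of the map $(p',f',\delta')\mapsto \sum_i x_{r_i}(p',f',\delta')$; hence $\sum_i x_{r_i}(\cdot)$ is constant on the whole type space, say $\sum_i x_{r_i}\equiv c$.

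Next I would bring in the participation constraints. Individual rationality for recommender $r_i$ should require $x_{r_i}\ge v(\{r_i\})=0$, so $c=\sum_i x_{r_i}\ge 0$. Individual rationality for the seller requires $x_s\ge v(\{s\})=p\delta$, i.e. $(p+f(N))\delta - c \ge p\delta$, which is automatically fine since $f(N)\delta\ge 0$; this direction gives no contradiction. To force $c=0$ I would instead exploit a degenerate type: take $\delta=0$ (or alternatively $p=1$, $f\equiv 0$), so that $v(N)=(p+f(N))\cdot 0 = 0$. Feasibility $x_s+\sum_i x_{r_i}=v(N)=0$ together with $\sum_i x_{r_i}=c\ge 0$ and seller participation $x_s\ge v(\{s\})=0$ forces $x_s=0$ and $c=0$. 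Since $c$ is the global constant, $\sum_i x_{r_i}=0$ for all types, contradicting the assumption $\sum_i x_{r_i}\neq 0$.

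The main obstacle — and the place to be careful — is the quantifier structure in the first step: one must argue that truthfulness "for all true types" really does collapse $\sum_i x_{r_i}$ to a constant, rather than merely making each type a local or type-dependent optimum. The key observation making this clean is that the objective being minimized, $\sum_i x_{r_i}(p',f',\delta')$, does \emph{not} depend on the true type at all, so a single function is being minimized and every point of the domain is asserted to be a minimizer. A secondary subtlety is pinning down exactly which participation/feasibility conventions the paper intends (IR against singleton coalitions, and feasibility $x_s+\sum_i x_{r_i}=v(N)$); I would state these explicitly and note that the degenerate-$\delta$ argument only needs feasibility plus $x_{r_i}\ge 0$ and $x_s\ge 0$ at that one type. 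If the paper does not want to invoke $\delta=0$, an alternative is to evaluate the constant $c$ at two types with different $v(N)$ and combine with feasibility and seller-IR to squeeze $c$ between $0$ and $f(N)\delta$ for \emph{every} $\delta$, then let $\delta\to 0$.
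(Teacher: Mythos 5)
Your proposal is correct and follows essentially the same route as the paper's proof: truthfulness forces $\sum_i x_{r_i}(p',f',\delta')\ge\sum_i x_{r_i}(p,f,\delta)$ for every report (your ``constancy'' observation), and the seller's participation constraint at a type with sufficiently small worth (the paper picks any $(p',f',\delta')$ with $(p'+f'(N))\delta'<\sum_i x_{r_i}(p,f,\delta)$, you take $\delta=0$) then squeezes the recommenders' total payment to zero. The only cosmetic difference is that the paper derives the contradiction directly from the two inequalities without first naming the constant.
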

\begin{proof}
To ensure participation for the seller, we must have $x_s(p,f,\delta) = (p+f(N)) \cdot \delta 
- \sum_i x_{r_i}(p,f,\delta) \ge 0$ for all $p$, $f$, and $\delta.$ To ensure participation 
for the recommenders $r_1$ to $r_n$ we must have $\sum_i x_{r_i}(p,f,\delta) \ge 0$ for all $p$, 
$f$, and $\delta$. Now suppose $(x_s,$ $x_{r_1}$, $\dots, x_{r_n})$ with $\sum_{i} x_{r_i} \neq 0$
was truthful. It follows that $x_s(p,f,\delta) \ge x_s(p',f',\delta')$ for all $p'$, $f'$, and
$\delta'$, i.e. $(p+f(N))\cdot \delta - \sum_i x_{r_i}(p,f,\delta) \ge (p+f(N)) \cdot \delta - 
\sum_i x_{r_i} (p',f',\delta').$ And hence, $\sum_i x_{r_i} (p',f',\delta') \ge \sum_i x_{r_i}
(p,f,\delta).$ But since $\sum_{i} x_{r_i}(p,f,\delta) > 0$ there must be $p'$, $f'$, $\delta'$ 
such that$(p'+f'(N)) \cdot \delta' < \sum_i x_{r_i}(p,f,\delta)$ with $\sum_i x_{r_i}(p',f',\delta') 
\le (p'+f'(N)) \cdot \delta' < \sum_i x_{r_i}(p,f,\delta).$ Contradiction!
\end{proof}


%
\subsection{One Recommender}\label{sec:one}
We begin by studying the problem of finding ``fair'' prices in the setting $N = \{s,r\}$, i.e., there is only one seller and one recommender. In this setting the games $\langle N, v \rangle$ (Linear) and $\langle N, v \rangle$ (Threshold) are equivalent. We discuss the solution concepts {\em Core}, {\em Shapley value}, and {\em Nash Bargaining Solution}. For a more detailed discussion of these solution concepts see~\cite{M97,OR94}. 
%
%
\subsubsection{The Core}\label{sec:one-core}
%

The {\em Core}~\cite{G59} of a coalitional game is an outcome of cooperation among all players where no coalition of players can obtain higher payoffs for all of its members. A payoff vector in the Core is ``fair'' in the sense that no subset of players can justifiably argue that they are paid to little, as they are unable to achieve higher payoffs on their own.

More formally, the Core of the game $\langle N,v \rangle$ is the set of {\em feasible} payoff vectors  $(x_i)_{i \in N}$ for which there is no coalition $S \subseteq N$ and $S$-{\em feasible} payoff vector $(y_i)_{i \in N}$ such that $y_i > x_i$ for all $i \in S.$ Recall that a payoff vector $(x_i)_{i \in N}$ is {\em feasible} if $\sum_{i \in N} x_i = v(N)$. It is $S$-{\em feasible} if $\sum_{i \in S} x_i = V(S)$.


The Core can be shown to be non-empty by means of the Bondareva-Shapley Theorem~\cite{B63, S67}, which states that a game has a non-empty core if and only if it is balanced. A game $\langle N, v \rangle$ is {\em balanced} if for every balanced collections of weights $(\lambda_S)_{S \subseteq N}$: $\sum_{S} \lambda_S \cdot v(S) \leq v(N).$ A {\em balanced collection of weights} $(\lambda_S)_{S \subseteq N}$ is a collection of numbers $\lambda_S \in [0,1]$ (one for each coalition $S \subseteq N$) such that for all $i$: $\sum_{S \subseteq N: i \in S} \lambda_S = 1.$



\begin{theorem}\label{the:one-core}
The game $\langle \{s,r\},v \rangle$ (General) has a non-empty core. 
\end{theorem}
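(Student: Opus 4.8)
The plan is to invoke the Bondareva--Shapley Theorem, which the excerpt has just recalled: the game $\langle \{s,r\}, v\rangle$ (General) has a non-empty core if and only if it is balanced, i.e.\ for every balanced collection of weights $(\lambda_S)_{S\subseteq N}$ we have $\sum_S \lambda_S v(S) \le v(N)$. So first I would enumerate the coalitions of $N = \{s,r\}$: the only non-empty ones are $\{s\}$, $\{r\}$, and $\{s,r\}$, with worths $v(\{s\}) = p\delta$, $v(\{r\}) = 0$, and $v(\{s,r\}) = (p+f(s,r))\delta = v(N)$. A balanced collection of weights must satisfy $\sum_{S \ni s} \lambda_S = 1$ and $\sum_{S \ni r} \lambda_S = 1$, i.e.\ $\lambda_{\{s\}} + \lambda_{\{s,r\}} = 1$ and $\lambda_{\{r\}} + \lambda_{\{s,r\}} = 1$. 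Writing $t := \lambda_{\{s,r\}} \in [0,1]$, this forces $\lambda_{\{s\}} = \lambda_{\{r\}} = 1-t$.

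Then the balancedness inequality reads
\begin{equation*}
(1-t)\, p\delta + (1-t)\cdot 0 + t\,(p+f(s,r))\delta \;\le\; (p+f(s,r))\delta .
\end{equation*}
Since $f(s,r) \ge 0$ and $\delta \ge 0$, the left-hand side is $(1-t)p\delta + t(p+f(s,r))\delta = p\delta + t f(s,r)\delta \le p\delta + f(s,r)\delta = (p+f(s,r))\delta$, because $t \le 1$. Hence the inequality holds for every balanced collection of weights, so the game is balanced, and by Bondareva--Shapley its core is non-empty. (Equivalently, and even more directly, one can just exhibit a point in the core: the payoff vector $x_s = (p+f(s,r))\delta$, $x_r = 0$ is feasible, and no coalition can improve on it, since $\{s\}$ alone gets only $p\delta \le x_s$, $\{r\}$ alone gets $0 = x_r$, and the grand coalition gets exactly $v(N)$; I would probably include this explicit witness as a remark since it also foreshadows the discussion that the core contains the ``seller takes everything'' vector.)

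There is essentially no obstacle here: the statement is a finite two-player computation and the only mild subtlety is remembering that balancedness must be checked over \emph{all} balanced weight collections, which in the two-player case is the one-parameter family above — but monotonicity of $v$ in the additive contribution $f(s,r)$ together with $\delta, f \ge 0$ makes the inequality immediate. If one prefers to avoid even citing Bondareva--Shapley, the explicit-witness argument is fully self-contained and arguably cleaner, so I would lead with the computational check and mention the witness as confirmation.
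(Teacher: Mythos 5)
Your proof is correct and follows essentially the same route as the paper: invoke Bondareva--Shapley, observe that every balanced collection of weights on $N=\{s,r\}$ is the one-parameter family $\lambda_{\{s\}}=\lambda_{\{r\}}=1-t$, $\lambda_{\{s,r\}}=t$, and reduce the balancedness inequality to $f(\{s,r\})\,\delta \ge 0$. The explicit witness $((p+f(\{s,r\}))\delta,\,0)$ is a nice self-contained addition (and is confirmed by the paper's subsequent characterization of the core), but the main argument matches the paper's.
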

\begin{proof}
Let $x \in [0,1].$ All balanced collections of weights $(\lambda_S)_{S \subseteq N}$ are of the form $\lambda_{S} = x$ for $S = \{s\}, \{r\}$ and $\lambda_{S} = 1-x$ for $S = \{s,r\}.$ By the Bondareva-Shapley Theorem, the Core is non-empty if and only if for all values $x \in [0,1]:$
\begin{align*}
x \cdot (v(\{s\}) + v(\{r\})) + (1-x) v(\{s,r\}) & \leq v(\{s,r\}).
\end{align*}
For $x = 0$ this is trivially true. Next we we analyze the case $x > 0.$ Since $v(\{r\}) = 0$ and $v(\{s,r\}) - v(\{s\}) = f(s,r) \cdot \delta$, 
\begin{align*}
                & \ x \cdot v(\{s\}) - x \cdot v(\{s,r\}) &&\leq 0\\
\Leftrightarrow & \ v(\{s,r\}) - v(\{s\}) &&\geq 0\\
\Leftrightarrow & \ f(\{s,r\}) \cdot \delta &&\geq 0.
\end{align*}
Since $f(\{s,r\}) \ge 0$ and $\delta \ge 0$ this is always true.
\end{proof}
Recall that the games $\langle N, v \rangle$ (Linear) and $\langle N, v \rangle$ (Threshold) are special cases of the game $\langle N, v \rangle$ (General) and so Theorem~\ref{the:one-core} also shows non-emptiness of the Core for these games. Next we give necessary and sufficient conditions for a payoff vector $(x_s, x_{r_1}, \dots, x_{r_n})$ to be in the Core.
%
%
%
\begin{theorem}
The payoff vector $((p+f(\{s,r\}))\delta - x, x)$ is in the Core of the game $\langle \{s,r\}, v\rangle$ (General) if and only if:
\begin{align*}
&0 \le x \leq f(\{s,r\}) \cdot \delta.
\end{align*}
\end{theorem}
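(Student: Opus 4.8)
The plan is to unwind the definition of the Core for the two-player game $\langle\{s,r\},v\rangle$ (General) directly. Write $x_s = (p+f(\{s,r\}))\delta - x$ and $x_r = x$. By construction $x_s + x_r = (p+f(\{s,r\}))\delta = v(\{s,r\})$, so the candidate vector is automatically feasible, and it remains only to characterize when no coalition $S \subseteq \{s,r\}$ blocks it.

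There are just three coalitions to examine. For the grand coalition $S=\{s,r\}$, any $S$-feasible $(y_s,y_r)$ satisfies $y_s+y_r = v(\{s,r\}) = x_s+x_r$, so it is impossible to have both $y_s > x_s$ and $y_r > x_r$; hence $\{s,r\}$ never blocks, for any feasible vector. Thus membership in the Core is decided entirely by the two singleton coalitions, whose worths are $v(\{s\}) = p\delta$ and $v(\{r\}) = 0$ (from Table~\ref{tab:worths}).

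For $S=\{s\}$, the only $\{s\}$-feasible payoff is the one with $y_s = v(\{s\}) = p\delta$, so the condition ``there is no $\{s\}$-feasible $y$ with $y_s > x_s$'' reduces to $x_s \ge p\delta$. Substituting $x_s = (p+f(\{s,r\}))\delta - x$ and using $\delta \ge 0$, this is equivalent to $x \le f(\{s,r\})\delta$. Symmetrically, for $S=\{r\}$ the only $\{r\}$-feasible payoff has $y_r = v(\{r\}) = 0$, so the no-blocking condition is $x_r \ge 0$, i.e.\ $x \ge 0$. Conjoining the two conditions gives exactly $0 \le x \le f(\{s,r\})\delta$, establishing both directions of the equivalence at once.

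The only point requiring care — and the sole (minor) obstacle — is recognizing that for a \emph{singleton} coalition the Core's strict-inequality improvement test collapses to a weak inequality on the candidate payoff, because $S$-feasibility pins the blocking payoff down to the single value $v(S)$; everything else is a one-line substitution. One may also note in passing that $f(\{s,r\}) \ge 0$ and $\delta \ge 0$ ensure the stated interval is nonempty, consistent with Theorem~\ref{the:one-core}.
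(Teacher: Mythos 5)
Your proof is correct and follows essentially the same route as the paper's: a direct case analysis of the three coalitions under the definition of the Core, with the grand coalition never blocking and the two singletons yielding $x \le f(\{s,r\})\delta$ and $x \ge 0$ respectively. Your observation that singleton $S$-feasibility pins the blocking payoff to $v(S)$ lets you establish both directions at once, which is a slightly tidier packaging of the same argument the paper splits into a forward direction and a proof by contradiction.
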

\begin{proof}
Let $x_s = (p+f(\{s,r\}))\delta - x$ and let $x_r = x.$ If the vector $(x_s,x_r)$ is in the Core of the game $\langle \{s,r\}, v\rangle$ (General), then there exists no coalition $S \subseteq N = \{s,r\}$ and an $S${-feasible} payoff vector $y=(y_s,y_r)$ such that $y_i > x_i$ for all $i \in S.$ That is, for all $S \subseteq N$ and $S${-feasible} payoff vectors $y=(y_s,y_r)$ we have that $y_i \le x_i$ for all $i \in S.$ For $S = \{s,r\}$ this means that $y_s + y_r = v(\{s,r\}) \le x_s + x_r = v(\{s,r\})$ (which is trivially true). For $S = \{s\}$ this means that $y_s = v(\{s\}) = p \cdot \delta \le x_s = (p + f(\{s,r\})) \cdot \delta - x$, i.e. $x \le f(\{s,r\}) \cdot \delta.$ For $S = \{r\}$ this means that $y_r = v(\{r\}) = 0 \le x_r = x$, i.e. $x \ge 0$. That is, $0 \le x \le f(\{s,r\}) \cdot \delta.$

For the reverse direction assume by contradiction that $0 \le x \le f(\{s,r\}) \cdot \delta$ but that $(x_s,r_r)$ is {\em not} in the Core, i.e. there exists a coalition $S \subseteq N = \{s,r\}$ and a $S${-feasible} payoff vector $y=(y_s,y_r)$ such that $y_i > x_i$ for all $i \in S.$ We cannot have $S = \{s,r\}$ as then $y_s+y_r = v(\{s,r\}) > x_s + x_r = v(\{s,r\})$, which gives a contradiction. But if $S = \{s\}$, then $y_s = v(\{s\}) = p \cdot \delta > x_s = (p + f(\{s,r\}))\cdot \delta - x$, i.e. $x > f(\{s,r\}) \cdot \delta$, which gives a contradiction. Finally, if $S = \{r\}$, then $y_r = v(\{r\}) = 0 > x_r = x$, i.e. $x < 0$, which also gives a contradiction.
\end{proof}
For the games $\langle \{s,r\}, v\rangle$ (Linear) and $\langle \{s,r\}, v \rangle$ (Threshold) this means that $((p+q)\delta - x, x)$ is in the core if and only if:
\begin{align*}
&0 \le x \leq q \cdot \delta.
\end{align*}
This implies that any ``feasible'' payoff vector is in the Core. The only restriction on the payoff vector is that the payoff to the recommender be non-negative and in expectation no higher than the ``extra profit'' the seller can expect from the recommendation. In particular, a payoff vector that gives everything to the seller and nothing to the recommender would be in the Core. This demonstrates the weakness of the recommenders: They cannot form a coalition with non-zero value without the seller.
%
%
\subsubsection{Shapley Value}\label{sec:one-shapley}
One problem with the Core is that it does {\em not} assign a unique payoff vector to a game. This makes it necessary to have another criterion for choosing a payoff vector. The {\em Shapley value}~\cite{S53} is a solution concept that assigns to each game a unique, provably fair payoff vector. In general, a {\em value} $\phi: v \rightarrow \mathbb{R}^{n+1}$ maps each game $\langle N,v \rangle$ to a unique vector $\phi(v)$; the $i${-th} entry $\phi_i(v)$ of this vector being the {\em expected payoff} to player $i.$ The Shapley value is the unique value satisfying the following axioms:
\begin{enumerate}
\setlength{\itemsep}{0pt}
\item {\em Symmetry:} 
If player $i$ and $j$ are interchangeable, then $\phi_i(v) = \phi_j(v).$ Formally, if for every $S \subseteq N$ s.t. $i \in S$, $j \not\in S$: $v( (S \setminus \{i\}) \cup \{j\}) = v(S)$, then $\phi_i(v) = \phi_j(v).$
\item {\em Dummy:} If player $i$'s contribution to any coalition $S$ is zero, then $\phi_i(v) = 0.$ Formally, if for every $S \subseteq N \setminus \{i\}$: $v(S \cup \{i\}) = v(S)$, then $\phi_i(v) = 0.$
\item {\em Additivity:} Player $i$'s entry $\phi_i(v)$ should be additive in $v.$ Formally, if $\langle N, v + w\rangle$ is derived from $\langle N,v \rangle$ and $\langle N,w \rangle$ by defining $(v+w)(S) = v(S) + w(S)$ for all $S \subseteq N$, then $\phi_i(v+w) = \phi_i(v) + \phi_i(w)$ for all $i \in N$. 
\end{enumerate}
These axioms can be interpreted as formalizing a notion of ``fairness'', that postulates that the expected payoff to player $i$ be proportional to player $i$'s contribution to the outcome of the game. For an analysis along these lines see~\cite{M91}. 
\begin{definition} \label{def:shapley}
The Shapley value $\phi(v) = (\phi_1(v), \dots, \phi_N(v))$ of the game $\langle N, v\rangle$ is defined as follows:
\begin{align*}
\phi_i(v) &= \sum_{S \subseteq N \setminus \{ i \}} \frac{|S|!(|N|-1-|S|)!}{|N|!} \cdot (v(S \cup \{ i\}) - v(S)). 
\end{align*}
\end{definition}
One interpretation of this is: Suppose that all the players are arranged in some order, all orders being equally likely, then $\phi_i(v)$ is the {\em expected marginal contribution} of player $i$ to the set of players who precede her.
\begin{theorem}\label{the:shapley-one}
Consider the game $\langle \{s,r\}, v \rangle$ (General). The Shapley value $\phi(v) = (\phi_s(v), \phi_{r}(v))$ is given by
\begin{align*}
\phi_s(v)   = p \delta + \frac{1}{2} f(\{s,r\}) \delta \mbox{\hspace{2mm} and \hspace{2mm}} \phi_{r}(v) = \frac{1}{2} f(\{s,r\}) \delta.
\end{align*}
\end{theorem}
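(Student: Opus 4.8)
The plan is to compute the Shapley value directly from Definition~\ref{def:shapley} with $N = \{s,r\}$, so that $|N| = 2$. For player $s$ the sum ranges over the two subsets $S \subseteq N \setminus \{s\}$, namely $S = \emptyset$ and $S = \{r\}$. In both cases the combinatorial weight $\frac{|S|!\,(|N|-1-|S|)!}{|N|!}$ equals $\tfrac12$ (for $S=\emptyset$ it is $\tfrac{0!\,1!}{2!}$, for $S=\{r\}$ it is $\tfrac{1!\,0!}{2!}$). The marginal contributions are $v(\{s\}) - v(\emptyset) = p\delta$ and $v(\{s,r\}) - v(\{r\}) = (p+f(\{s,r\}))\delta - 0$, using the worths recorded in Table~\ref{tab:worths}. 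Hence $\phi_s(v) = \tfrac12 p\delta + \tfrac12 (p+f(\{s,r\}))\delta = p\delta + \tfrac12 f(\{s,r\})\delta$.

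For player $r$ the sum ranges over $S = \emptyset$ and $S = \{s\}$, again each with weight $\tfrac12$. The marginal contributions are $v(\{r\}) - v(\emptyset) = 0$ and $v(\{s,r\}) - v(\{s\}) = (p+f(\{s,r\}))\delta - p\delta = f(\{s,r\})\delta$. Therefore $\phi_r(v) = \tfrac12 \cdot 0 + \tfrac12 f(\{s,r\})\delta = \tfrac12 f(\{s,r\})\delta$, which is the claimed expression.

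As a sanity check one can verify feasibility: $\phi_s(v) + \phi_r(v) = p\delta + f(\{s,r\})\delta = (p + f(\{s,r\}))\delta = v(N)$, as required. Equivalently, the result follows from the ``random order'' interpretation noted after Definition~\ref{def:shapley}: with two players there are two equally likely orders, $s$ before $r$ and $r$ before $s$; the recommender's marginal contribution is $0$ in the first order and $f(\{s,r\})\delta$ in the second, giving expected value $\tfrac12 f(\{s,r\})\delta$, and the seller picks up the rest.

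I do not anticipate any real obstacle here — the statement is a direct evaluation of the Shapley formula on a two-player game, and the only thing to be careful about is reading off the correct worths $v(\{s\}) = p\delta$ and $v(\{s,r\}) = (p+f(\{s,r\}))\delta$ from the General scenario (not the Linear or Threshold columns).
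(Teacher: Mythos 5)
Your proposal is correct and takes exactly the same route as the paper, which simply states that the claim follows from Definition~\ref{def:shapley} and the worths in Table~\ref{tab:worths}; you merely spell out the two-subset computation that the paper leaves implicit. The weights, marginal contributions, and the feasibility check are all right, so there is nothing to add.
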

\begin{proof}
The claim follows from the definition of the game $\langle \{s,r\}, v \rangle$ (General) and Definition~\ref{def:shapley}. Note that the worths $v(S)$ of all $S \subseteq N = \{s,r\}$ can be read from Table~\ref{tab:worths} by treating $r_1$ as $r$ and ignoring rows containing $r_2.$ 
\end{proof}
For the game $\langle \{s,r\}, v \rangle$ (Linear, Threshold) this means that
\begin{align*}
\phi_s(v)     = p \delta + \frac{1}{2} q \delta  \mbox{\hspace{2mm} and \hspace{2mm}}\phi_{r}(v) = \frac{1}{2} q \delta.
\end{align*}
%
This shows that the payoff to the recommender should be proportional to her contribution to the seller's expected ``extra profit''. In particular, it shows that the recommender's payoff should be {\em linear} in her contribution to the purchase probability, i.e. $f(\{s,r\})$, and also in the seller's margin or gain $\delta.$ This is consistent with ``real life'' pricing schemes that redeem the recommender with a certain percentage of the sales price~\cite{LAH06}, assuming that for a given product family the margin is proportional to the sales price.

%
%
\subsubsection{The Nash Bargaining Solution}\label{sec:one-nash}
The last solution concept that we discuss in this section is the {\em Nash Bargaining Solution}~\cite{N50}.\footnote{The only connection between this solution concept and the concept of a {\em Nash equilibrium}~\cite{M97,OR94} is John F.~Nash.} 
%
%
The basic idea here is to view the game $\langle N,v \rangle$ as a {\em bargaining problem} over a set $F$ of feasible payoff vectors $f = (f_0,\dots,f_n)$ and a dedicated payoff vector $d=(d_0,\dots,d_n)$; the payoff vector in case of a {\em disagreement}. A {\em solution} is a function $\phi: (F,d) \rightarrow F.$ The Nash Bargaining Solution is the unique solution satisfying the following axioms:
%
%
\begin{enumerate}
\setlength{\itemsep}{0pt}
\item {\em Pareto Efficiency.} There is no $f \in F$ such that $f_i \ge \phi_i(F,d)$ for all $i \in N$ and $f_j > \phi_j(F,d)$ for at least one $j \in N.$ 
\item {\em Individual Rationality.}
For all $i \in N:$ $\phi_i(F,d) \ge d_i$.
\item {\em Scale Covariance.} If $F' = \{(\lambda_0 \cdot f_0 + \gamma_0, \dots, \lambda_n \cdot f_n + \gamma_n) \ | \ (f_0, \dots, f_n) \in F\}$ and $d' = (\lambda_0 \cdot d_0 + \gamma_0, \dots, \lambda_n \cdot d_n + \gamma_n)$, then $\phi(F',d') = (\lambda_0 \cdot \phi_0(F,d) + \gamma_0, \dots, \lambda_n \cdot \phi_n(F,d) + \gamma_n).$ 
\item {\em Independence of Irrelevant Alternatives.} 
If $F' \subseteq F$ and $\phi(F,d) \in F'$, then $\phi(F',d) = \phi(F,d)$.
\item {\em Symmetry.}
If $ (f_0, \dots, f_i, \dots, f_j, \dots, f_n) \in F$ implies $(f_0, \dots, f_j, \dots, f_i, \dots, f_n) \in F$ and $d_i = d_j$, then $\phi_i(F,d) = \phi_j(F,d).$ 
\end{enumerate}
The Nash Bargaining solution is ``fair'' in the sense that is {\em Pareto effcient}, i.e.~it is impossible to improve the payoff of one or more players without hurting that of others. One can show that it satisfies 
$\phi(F,d) \in \mbox{argmax}_{f \in F} \prod_i (f_i - d_i)$~\cite{M97}. We use this to prove:
\begin{theorem}
For $\langle \{s,r\}, v \rangle$ (General) let $F = \{ (f_s,f_r) \ |$ $ \ f_s \ge 0, f_r \ge 0 \ \mbox{and} \ f_s + f_r = (p + f(\{s,r\})) \cdot \delta \}$ and $d = (d_s, d_r) = (p \cdot \delta,0)$. Then, 
\begin{align*}
\phi(F,d) = ((p+\frac{1}{2}f(\{s,r\}) \delta , \frac{1}{2}f(\{s,r\}) \delta ).
\end{align*}
\end{theorem}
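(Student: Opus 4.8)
The plan is to invoke the characterization of the Nash Bargaining Solution recorded just before the theorem, namely $\phi(F,d) \in \mbox{argmax}_{f \in F} \prod_i (f_i - d_i)$, and then solve a one-dimensional optimization problem. First I would check that this characterization is legitimately applicable here: the set $F$ is the intersection of the hyperplane $f_s + f_r = (p + f(\{s,r\}))\delta$ with the nonnegative orthant, hence a compact convex segment, and the five axioms of the Nash solution pin down a unique point on it. Provided $f(\{s,r\})\delta > 0$, the candidate vector $(p\delta + \tfrac12 f(\{s,r\})\delta,\ \tfrac12 f(\{s,r\})\delta)$ lies in $F$ and strictly dominates $d = (p\delta, 0)$ in both coordinates, so the Nash product attains a strictly positive maximum and the argmax formulation indeed determines $\phi(F,d)$.

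Next I would reduce to a single variable. Writing $f_r = t$ and $f_s = (p + f(\{s,r\}))\delta - t$ with $t$ ranging over $[0,\ (p+f(\{s,r\}))\delta]$, the Nash product becomes
\[
(f_s - d_s)(f_r - d_r) = \bigl(f(\{s,r\})\delta - t\bigr)\, t .
\]
This vanishes at $t = 0$ and at $t = f(\{s,r\})\delta$ and is negative for $t > f(\{s,r\})\delta$, so any maximizer lies in $[0,\ f(\{s,r\})\delta]$, where the expression is a downward parabola in $t$. Differentiating gives $f(\{s,r\})\delta - 2t = 0$, so the unique maximizer is $t = \tfrac12 f(\{s,r\})\delta$, i.e. $f_r = \tfrac12 f(\{s,r\})\delta$ and $f_s = p\delta + \tfrac12 f(\{s,r\})\delta$, which is exactly the claimed vector.

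Finally I would handle the degenerate case $f(\{s,r\})\delta = 0$ separately, because then the Nash product is identically zero on $F$ and the argmax condition no longer singles out a point. In that case $F = \{(f_s, f_r) : f_s, f_r \ge 0,\ f_s + f_r = p\delta\}$ and $d = (p\delta, 0)$, so feasibility together with individual rationality ($f_s \ge p\delta$, $f_r \ge 0$) forces $\phi(F,d) = (p\delta, 0)$; this agrees with the stated formula since the $\tfrac12 f(\{s,r\})\delta$ terms vanish. I expect this boundary case, and the preliminary check that the product-maximization form of the solution is valid (convexity of $F$ and strict dominance of $d$), to be the only genuine subtleties; the optimization itself is a routine calculus step.
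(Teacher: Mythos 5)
Your proof is correct and follows exactly the route the paper intends: it invokes the characterization $\phi(F,d)\in\mathrm{argmax}_{f\in F}\prod_i(f_i-d_i)$ stated immediately before the theorem and carries out the one-variable quadratic maximization, which the paper itself leaves implicit. The explicit treatment of the degenerate case $f(\{s,r\})\delta=0$ is a small extra care the paper omits (and in fact even there the unconstrained argmax of $-t^2$ already yields $t=0$), so nothing is missing.
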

For the game $\langle \{s,r\}, v \rangle$ (Linear, Threshold) this means that:
\begin{align*}
\phi(F,d) &= \left( \left(p+\frac{1}{2}q\right) \delta , \frac{1}{2}q\delta \right).
\end{align*}
This demonstrates that the Nash Bargaining Solution and the Shapley value coincide in our model. On the one hand, this is surprising as the axioms used to define the Shapley value and the Nash Bargaining Solution are quite different. On the other hand, this is intuitive as for two players there is only one non-trivial coalition to be considered for the Shapley value. So if this two-player coalition leads to a bigger payoff $P$ than the sum of its two non-cooperative ``atoms'', then the different symmetry axioms present in both solution concepts imply that this surplus should be divided 50-50. 
%
This holds as long as the feasible payoff vectors are $F=\{(f_0,f_1) | f_0+f_1 \le P \}$, and so $(f_0,f_1) \in F \Leftrightarrow (f_1,f_0) \in F$, but would stop to hold if, e.g.~there were additional constraints on $F$ such as $f_0 \ge 2 \cdot f_1$. In such cases, the symmetry axiom of the Nash Bargaining Solution no longer applies.
%
%
\subsection{Many Recommenders}\label{sec:many}
Next we study the problem of finding ``fair'' prices in the more general setting $N = \{s, r_1, \dots, r_n\}$, i.e. there is one seller and $n \ge 1$ recommenders. Note that in this setting the games $\langle N,v \rangle$ (Linear) and $\langle N,v \rangle$ (Threshold) are no longer equivalent. 
As in the setting where $N = \{s,r\}$ we study the solution concepts Core, Shapley value, and Nash Bargaining Solution.
\subsubsection{The Core}\label{sec:many-core}
Recall that the Core of the game $\langle N, v \rangle$ comprises all feasible payoff vectors with which no coalition $S \subseteq N$ is ``unhappy'' meaning that the players in $S$ cannot break away to obtain a higher payoff on their own. For a formal definition of the Core (and related definitions) see Section~\ref{sec:one-core}. 
\begin{theorem}\label{the:many-core-1}
The game $\langle N,v \rangle$ (General) has a non-empty core iff for every balanced collections of weights $(\lambda_S)_{S \subseteq N}$:
\begin{align*}
&\sum_{S \subset N: s \in S} \left[ \lambda_{S} (f(N) - f(S)) \right] \geq 0.
\end{align*}
\end{theorem}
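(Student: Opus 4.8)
The plan is to invoke the Bondareva--Shapley Theorem~\cite{B63,S67} exactly as in the proof of Theorem~\ref{the:one-core}, and then simplify the balancedness inequality $\sum_{S\subseteq N}\lambda_S v(S)\le v(N)$ using the special structure of the game $\langle N,v\rangle$ (General). By Bondareva--Shapley the core is non-empty iff the game is balanced, i.e.\ iff $\sum_{S\subseteq N}\lambda_S v(S)\le v(N)$ holds for every balanced collection of weights $(\lambda_S)_{S\subseteq N}$; so it suffices to show that, for each fixed balanced collection, this inequality is equivalent to $\sum_{S\subset N:\,s\in S}\lambda_S(f(N)-f(S))\ge 0$.

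First I would record two facts about $v$ in the General model: whenever $s\notin S$ we have $v(S)=0$, and whenever $s\in S$ we have $v(S)=(p+f(S))\delta$; in particular $v(\{s\})=p\delta$ since $f(\{s\})=0$, and $v(N)=(p+f(N))\delta$. Consequently, for any balanced collection $(\lambda_S)$ the coalitions not containing $s$ contribute nothing to the left-hand side, and the inequality $\sum_S\lambda_S v(S)\le v(N)$ becomes $\sum_{S:\,s\in S}\lambda_S(p+f(S))\delta\le(p+f(N))\delta$. We may assume $\delta>0$, since for $\delta=0$ the game is identically zero and the core trivially contains the zero payoff vector.

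Next I would use that, by the definition of a balanced collection, the weights of the coalitions containing the seller sum to one: $\sum_{S:\,s\in S}\lambda_S=1$. This lets me cancel $p\delta$ from both sides, rewrite the right-hand side as $\sum_{S:\,s\in S}\lambda_S(p+f(N))\delta$, divide through by $\delta$, and rearrange to obtain $\sum_{S:\,s\in S}\lambda_S(f(N)-f(S))\ge 0$. Since the $S=N$ summand is zero, this is the same as $\sum_{S\subset N:\,s\in S}\lambda_S(f(N)-f(S))\ge 0$. Every step in this chain is an equivalence for the fixed balanced collection, so quantifying over all balanced collections yields the claimed ``iff''.

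I do not expect a genuine obstacle here: the argument is the algebra of the one-recommender case (Theorem~\ref{the:one-core}) carried out with general weights. The only points needing care are (i) remembering that only the seller's balancedness constraint $\sum_{S\ni s}\lambda_S=1$ is actually invoked — the recommenders' constraints are part of what makes $(\lambda_S)$ admissible but play no role in the manipulation — and (ii) separating off the degenerate case $\delta=0$ before dividing by $\delta$.
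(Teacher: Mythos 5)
Your proposal is correct and follows essentially the same route as the paper's proof: apply the Bondareva--Shapley Theorem, drop the coalitions with $v(S)=0$, and use the balancedness constraint $\sum_{S\ni s}\lambda_S=1$ to cancel the $p$ terms and absorb the $S=N$ summand. Your explicit separation of the degenerate case $\delta=0$ is in fact slightly more careful than the paper, which divides by $\delta$ without comment.
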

\begin{proof}
Let $(\lambda_S)_{S \subseteq N}$ be a balanced collection of weights. Since $v(S) = 0$ whenever $s \not\in S$, applying the Bondareva-Shapley Theorem~\cite{B63, S67} to the game $\langle N,v \rangle$ (General) gives:
\begin{align*}
&\sum_{S \subseteq N: s \in S} \left[ \lambda_S (p + f(S)) \right] \leq p + f(N).
\end{align*}
Since $(\lambda_S)_{S \subseteq N}$ is a balanced collections of weights, we have $\sum_{S \subseteq N: s \in S} \lambda_S = 1$ and $\lambda_N = 1 - \sum_{S \subset N: s \in S} \lambda_S.$ It follows that:
\begin{align*}
                & \sum_{S \subseteq N: s \in S} \lambda_S \cdot f(S) & \leq f(N).\\
\Leftrightarrow & \sum_{S \subset N: s \in S} [\lambda_S (f(S) - f(N))] + f(N) & \leq f(N)\\
\Leftrightarrow &\sum_{S \subset N: s \in S} \left[ \lambda_S (f(N) - f(S)) \right] &\geq 0. &&\qedhere
\end{align*}
\end{proof}
The condition given by Theorem~\ref{the:many-core-1} holds trivially for $\langle N,v \rangle$ (General) if $f(N) \ge f(S)$ for all $S \subseteq N$ since all the $\lambda_S$ values are non-negative. For the game $\langle N,v \rangle$ (Linear) and $\langle N,v \rangle$ (Threshold) this means that the core is {\em always} non-empty since $q_i \ge 0$ for all $i$ respectively $q > 0$. 
\begin{theorem}
Consider the game $\langle N, v\rangle$ (General). The payoff vector $(x_s = (p+f(N)) \cdot \delta - \sum_{i} x_{r_i}, x_{r_1}, \dots, x_{r_n})$ is in the Core if and only if for all $T \subseteq N$ s.t.~$s \not\in T$:
\begin{align*}
&0 \le \sum_{r_i \in T} x_{r_i} \leq (f(N) - f(N \setminus T)) \cdot \delta.
\end{align*}
\end{theorem}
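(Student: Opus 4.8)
The plan is to characterize membership in the Core directly from its definition, exactly as was done in the two-player case, but now testing all coalitions $S \subseteq N$ rather than just the three possibilities. First I would recall that a feasible payoff vector $(x_s, x_{r_1}, \dots, x_{r_n})$ with $x_s + \sum_i x_{r_i} = v(N) = (p+f(N))\delta$ lies in the Core iff for every coalition $S \subseteq N$ and every $S$-feasible vector $y$ there is some $i \in S$ with $y_i \le x_i$; equivalently, iff for every $S$ we have $\sum_{i \in S} x_i \ge v(S)$ (if the coalition's total worth exceeds what its members currently get, they can redistribute $v(S)$ to strictly improve everyone, and conversely). So the whole statement reduces to rewriting the family of inequalities $\{\sum_{i \in S} x_i \ge v(S) : S \subseteq N\}$ in the claimed form.

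Next I would split the coalitions $S$ into two classes according to whether $s \in S$. For $S$ with $s \notin S$, we have $v(S) = 0$, so the constraint is just $\sum_{r_i \in S} x_{r_i} \ge 0$; writing $T = S$ this gives the left inequality $0 \le \sum_{r_i \in T} x_{r_i}$ for all $T \subseteq N$ with $s \notin T$ (the case $T = \emptyset$ being vacuous). For $S$ with $s \in S$, write $S = \{s\} \cup (N \setminus (\{s\} \cup T))$ for the complementary set $T$ of recommenders not in $S$, i.e. $T = N \setminus S$, so $s \notin T$. Then $v(S) = (p + f(S))\delta = (p + f(N \setminus T))\delta$, and the constraint $\sum_{i \in S} x_i \ge v(S)$ reads $x_s + \sum_{r_i \notin T} x_{r_i} \ge (p + f(N\setminus T))\delta$. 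Substituting $x_s = (p+f(N))\delta - \sum_{i} x_{r_i}$ and cancelling, this becomes $(p+f(N))\delta - \sum_{r_i \in T} x_{r_i} \ge (p + f(N\setminus T))\delta$, i.e. $\sum_{r_i \in T} x_{r_i} \le (f(N) - f(N\setminus T))\delta$, which is the right inequality. I would note that as $S$ ranges over all coalitions containing $s$, the complement $T = N \setminus S$ ranges over all subsets of $N$ not containing $s$, so the two classes together yield precisely the claimed two-sided bound for every $T \subseteq N$ with $s \notin T$.

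Finally I would check that the feasibility constraint itself ($S = N$, giving $0 \ge 0$) is subsumed (it corresponds to $T = \emptyset$), and that the edge cases — the single-player coalitions $\{r_i\}$ and $\{s\}$, recovering $0 \le x_{r_i}$ and $\sum_i x_{r_i} \le (f(N) - 0)\delta = f(N)\delta$ — are the special instances $|T|=1$ and $T = N\setminus\{s\}$, consistent with the two-player theorem. The proof is essentially a bookkeeping argument with the substitution $x_s = (p+f(N))\delta - \sum_i x_{r_i}$; there is no real obstacle, but the one place to be careful is the bijection $S \leftrightarrow T = N\setminus S$ between coalitions containing $s$ and recommender-subsets excluding $s$, and making sure the algebraic cancellation of the $(p+f(N))\delta$ and $p\delta$ terms is carried out correctly so that the bound comes out as $(f(N) - f(N\setminus T))\delta$ rather than with a spurious $p\delta$.
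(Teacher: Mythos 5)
Your proposal is correct and follows essentially the same route as the paper: reduce Core membership of a feasible vector to the family of inequalities $\sum_{i \in S} x_i \ge v(S)$, split coalitions according to whether $s \in S$, and use the bijection $T = N \setminus S$ together with the substitution $x_s = (p+f(N))\delta - \sum_i x_{r_i}$ to obtain the two-sided bound. The only cosmetic difference is that you invoke the standard equivalence up front while the paper argues the two directions (including the explicit blocking-coalition contradiction) separately.
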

\begin{proof}
Assume that the payoff vector $(x_s = (p+f(N)) \cdot \delta - \sum_{i} x_{r_i}, x_{r_1}, \dots, x_{r_n})$ is in the Core. Since $v(S) = 0$ for all coalitions $S \subseteq N$ such that $s \not\in S$, it follows that: 

\vspace{6pt}\noindent 1. For all $S \subseteq N$ such that $s \in S:$
\begin{align*}
 &\ (p + f(N)) \delta - \sum_{r_i \in N} x_i + \sum_{r_i \in S} x_i &&\geq v(S)\\
\Leftrightarrow &\ (p + f(N)) \delta  - \sum_{r_i \in N \setminus S} x_i &&\geq (p+f(S))\delta\\
\Leftrightarrow &\ (f(N) - f(S)) \delta &&\geq \sum_{r_i \in N \setminus S} x_i 
\end{align*}
\noindent 2. For all $S \subseteq N$ such that $s \not\in S$:
\begin{align*}
\sum_{r_i \in S} x_i \geq v(S) \ \Leftrightarrow \ \sum_{r_i \in S} x_i \geq 0.
\end{align*}
With $T = N \setminus S$ in 1.~and $T = S$ in 2.~it follows that $0 \le \sum_{r_i \in T} x_{r_i} \leq (f(N) - f(N \setminus T)) \cdot \delta$ for all $T \subseteq N$ s.t.~$s \not\in T$.

For the reverse direction assume by contradiction that $0 \le \sum_{r_i \in T} x_{r_i} \leq (f(N) - f(N \setminus T)) \cdot \delta$ for all $T \subseteq N$ s.t.~$s \not\in T$ but that $(x_s = (p+f(N)) \cdot \delta - \sum_i x_{r_i}, x_{r_1}, \dots, x_{r_n})$ is {\em not} in the Core, i.e. there exists a coalition $S \subseteq N$ and an $S${-feasible} payoff vector $y=(y_s,y_{r_1}, \dots, y_{r_n})$ in which $y_k > x_k$ for all players $k$ in $S.$ Since $y$ is $S${-feasible}, the total payoff $\sum_{k \in S} y_k$ to the players in $S$ must equal $v(S).$ Since $y_p > x_p$ for all players $p$ in $S$, we must have $\sum_{k \in S} y_k > \sum_{k \in S} x_k.$ Thus, $v(S) > \sum_{k \in S} x_k.$

{\em Case 1:} If $s \in S$, since $v(S) = (p + f(S)) \cdot \delta$ and $\sum_{k \in S} x_k = (p+f(N))\delta - \sum_{r_i \in N \setminus S} x_{r_i}$, this means that: 
\begin{align*}
                  & (p + f(S)) \cdot \delta &&> (p+f(N))\delta - \sum_{r_i \in N \setminus S} x_{r_i}\\
\Leftrightarrow \ & \sum_{r_i \in N \setminus S} x_{r_i} &&> (f(N) - f(S)) \cdot \delta.
\end{align*}
With $T = N \setminus S$ this gives a contradiction to the fact that for all $T \subseteq N$ s.t.~$s \not\in T$: $\sum_{r_i \in T} x_{r_i} \leq (f(N) - f(N \setminus T)) \cdot \delta.$

{\em Case 2:} If $s \not\in S$, since $v(S) = 0$ and $\sum_{k \in S} x_k = \sum_{r_i \in S} x_{r_i}$, this means that: 
\begin{align*}
v(S) = 0 &> \sum_{r_i \in S} x_{r_i}.
\end{align*}
With $T = S$ this gives a contradiction to the fact that for all $T \subseteq N$ s.t.~$s \not\in T$: $\sum_{r_i \in T} x_{r_i} \ge 0.$
\end{proof}

For the game $\langle N, v\rangle$ (Linear) this means that the payoff vector $(x_s = (p+\sum_i q_i) \cdot \delta - \sum_i x_{r_i}, x_{r_1}, \dots, x_{r_n})$ is in the Core if and only if for all $T \subseteq N$ s.t.~$s \not\in T$:
\begin{align*}
&0 \le \sum_{r_i \in T} x_{r_i} \leq \left( \sum_{r_i \in T} q_i \right) \cdot \delta.
\end{align*}

For the game $\langle N, v\rangle$ (Threshold) this means that the payoff vector $(x_s = (p+q)\delta - \sum_i x_{r_i}, x_{r_1}, \dots, x_{r_n})$ is in the Core if and only if for all $T \subseteq N$ s.t.~$s \not\in T$: 
\begin{align*}
&0 \le \sum_{r_i \in T} x_i \leq \begin{cases} 0 & \text{, if} \ |T| \le n-k\\ q \cdot \delta& \text{, if} \ |T| > n-k \end{cases}.
\end{align*}
This means that for $\langle N, v\rangle$ (Linear) and $\langle N, v\rangle$ (Threshold) with $k = n$ a certain payoff vector is in the core precisely if no coalition of recommenders receives more than their joint contribution to the seller's expected ``extra profit''. 
For the game $\langle N, v \rangle$ (Threshold) with $k < n$ this
means that $x_{r_i} = 0$ for all $i$ (with $T = \{ r_i\}$) and, 
thus, the only payoff vector in the Core is $((p+q) \delta, 0, 
\dots, 0)$, i.e. the seller gets everything and the recommenders
get nothing.
%
%
\subsubsection{Shapley Value}\label{sec:many-shapley}
Recall that the Shapley value assigns to each game a unique payoff vector that is ``fair'' as it satisifies the {\em Symmetry}, {\em Dummy}, and {\em Additivity} axioms. For a formal definition of the Shapley value (and related definitions) see Section~\ref{sec:one-shapley}.
\begin{theorem}\label{the:shapley-many}
Consider the game $\langle N, v \rangle$ (General). The Shapley value $\phi(v) = (\phi_s(v), \phi_{r_1}(v), \dots, \phi_{r_n}(v))$ is given by
\begin{align*}
&\phi_s(v)     = p \delta + \sum_{S \subseteq N \setminus \{s\}} \frac{|S|!(|N|-1-|S|)!}{|N|!} f(S \cup \{s\}) \delta\\
&\phi_{r_i}(v) = \sum_{\substack{S \subseteq N \setminus \{r_i\}:\\s \in S}} \frac{|S|!(|N|-1-|S|)!}{|N|!} (f(S \cup \{r_i\}) - f(S))\delta.
\end{align*}
\end{theorem}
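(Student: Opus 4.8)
The plan is to compute both components directly from Definition~\ref{def:shapley}, exploiting the single structural feature of $\langle N, v\rangle$ (General): $v(S) = 0$ whenever $s \notin S$, and $v(S) = (p + f(S))\delta$ whenever $s \in S$ (with $f(\{s\}) = 0$). I will also use the elementary fact that the Shapley coefficients sum to one, i.e.~$\sum_{S \subseteq N \setminus \{i\}} \frac{|S|!(|N|-1-|S|)!}{|N|!} = 1$ for any fixed $i$, which is just the statement that in a uniformly random ordering of $N$ the set of players preceding $i$ is distributed according to exactly these weights.

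For $\phi_s(v)$: every $S$ appearing in Definition~\ref{def:shapley} satisfies $s \notin S$, hence $v(S) = 0$, while $v(S \cup \{s\}) = (p + f(S \cup \{s\}))\delta$. So each marginal contribution equals $p\delta + f(S \cup \{s\})\delta$. Pulling the constant $p\delta$ out of the weighted sum and using $\sum_{S \subseteq N\setminus\{s\}} \frac{|S|!(|N|-1-|S|)!}{|N|!} = 1$ yields $\phi_s(v) = p\delta + \sum_{S \subseteq N \setminus \{s\}} \frac{|S|!(|N|-1-|S|)!}{|N|!} f(S \cup \{s\})\delta$, as claimed.

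For $\phi_{r_i}(v)$: I split the sum over $S \subseteq N \setminus \{r_i\}$ according to whether $s \in S$. If $s \notin S$, then $s$ lies in neither $S$ nor $S \cup \{r_i\}$, so $v(S \cup \{r_i\}) - v(S) = 0 - 0 = 0$ and the term vanishes. If $s \in S$, then $v(S) = (p + f(S))\delta$ and $v(S \cup \{r_i\}) = (p + f(S \cup \{r_i\}))\delta$, so the marginal contribution is exactly $(f(S \cup \{r_i\}) - f(S))\delta$. What remains is precisely the stated sum over $\{S \subseteq N \setminus \{r_i\} : s \in S\}$.

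The argument is essentially routine, so there is no serious obstacle; the only points needing a little care are the bookkeeping of which terms survive (the vanishing of all of a recommender's marginal contributions to coalitions not containing $s$, which formalizes the recommenders' powerlessness without the seller) and the collapse of the $p\delta$ terms via the normalization of the Shapley weights. As a consistency check one can verify feasibility, $\phi_s(v) + \sum_{i} \phi_{r_i}(v) = (p + f(N))\delta = v(N)$, which follows from the efficiency of the Shapley value (equivalently, by telescoping marginal contributions along any fixed ordering of $N$).
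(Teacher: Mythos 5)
Your proof is correct and takes the same route as the paper, whose entire argument is the one-line assertion that the claim ``follows from the definition of the game and Definition~\ref{def:shapley}''; you have simply supplied the bookkeeping the paper leaves implicit (the vanishing of a recommender's marginal contribution to any coalition not containing $s$, and the collapse of the $p\delta$ terms via $\sum_{S \subseteq N\setminus\{i\}} \frac{|S|!\,(|N|-1-|S|)!}{|N|!} = 1$). The feasibility check at the end is a nice sanity confirmation but not needed.
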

\begin{proof}
The claim follows from the definition of the game $\langle N, v \rangle$ (General) and Definition~\ref{def:shapley}. The worths $v(S)$ of all $S \subseteq N = \{s,r_1,r_2\}$ are given explicitly in Table~\ref{tab:worths}. For $|N| > 3$  the worths $v(S)$ are obtained similarly.
\end{proof}

For the game $\langle N, v \rangle$ (Linear) this means that
\begin{align*}
&\phi_s(v)     = p \delta + \frac{1}{2}\sum_{i} q_i \delta  && \mbox{and}\\
&\phi_{r_i}(v) = \frac{1}{2} q_i \delta                     && \mbox{for all} \ i.
\end{align*}
%

For the game $\langle N, v \rangle$ (Threshold) this means that
\begin{align*}
&\phi_s (v)     = p \delta + \left(1-n\frac{k!(n-k)!}{(n+1)!}\right) q \delta && \mbox{and}\\
&\phi_{r_i} (v) = \frac{k!(n-k)!}{(n+1)!} q \delta               && \mbox{for all} \ i.
\end{align*}
%
%
%
This suggests that in the game $\langle N, v \rangle$ (Linear) each individual recommender should receive a share of exactly one half of her contribution to the expected ``extra profit'' of the recommender. For the game $\langle N, v \rangle$ (Threshold) the fraction $k!(n-k)!/(n+1)!$ is exactly the fraction of times where this recommender's recommendation ``makes a difference''. So all in all the Shapley value does not only give a unique payoff vector, but it also yields ``fair'' payoffs in the sense that the payoff to each recommender is proportional to the recommender's contribution to the ``extra profit'' the seller can expect.
\subsubsection{Anonymity-Proof Shapley Value}\label{sec:many-proofshapley}
What would be a ``fair'' payoff vector if each recommendation was a {\em collection of arguments}? A straightforward approach would be to compute the Shapley value on the basis of arguments and to redeem recommender $r_i$ with $\sum_{a \in S_i} \phi_a(v)$, where $a$ is an argument from the set of arguments $A$ and $S_i$ is the set of arguments that recommender $r_i$ possesses; the sets $S_i$ being disjoint. The problem with this approach, however, is that it might be beneficial for a recommender to withhold some of her arguments:
%

\vspace{6pt}  {\em Example 1.}
Let $A = \{a,b,c\}$, $v(\{a,b\}) = v(\{a,c\}) = v(\{a,b,c\}) = 1$, and $v(\{a\}) = v(\{b\}) = v(\{c\}) = v(\{b,c\}) = 0.$
Let $S_1 = \{a\}$ and $S_2 = \{b,c\}.$ Then $r_1$ gets $\phi_a(v) = \frac{1}{2}$ and $r_2$ gets $\phi_b(v) + \phi_c(v) = \frac{1}{6} + \frac{1}{6} = \frac{1}{3}.$

\vspace{6pt} {\em Example 2.}
Let $A' = \{a,b\}$, $v(\{a,b\}) = 1$, and $v(\{a\}) = v(\{b\}) = 0.$
Let $S_1 = \{a\}$ and $S_2 = \{b\}.$ Then $r_1$ gets $\phi_a(v) = \frac{1}{2}$ and $r_2$ gets $\phi_b(v) = \frac{1}{2}.$ I.e.~$r_2$ would be better off.

\vspace{6pt} \noindent The {\em anonymity-proof Shapley value}~\cite{OCS08} cannot be ``tricked'' in this way. It is defined as follows:
%
%
\begin{definition}
For any set $A' \subseteq A$ of declared arguments the {\em anonymity-proof Shapley value} $\psi_a(v)$ for $a \in A'$ is:
\begin{align*}
&\psi_a(v) = \frac{\phi_a(v)}{\sum_{a' \in A'} \phi_{a'}(v)} v(A').
\end{align*}
\end{definition}   
So a better way to redeem the recommenders would be to compute the anonymity-proof Shapley value $\psi_a(v)$ for each argument $a \in A'$ and to give each recommender $\sum_{a \in S_i} \psi_a(v).$ With this approach $r_1$ and $r_2$ would get $\psi_a = 3/5$ and $\psi_b(v) + \psi_c(v) = 2/5$ in Example 1 and $\psi_a = 3/4$ and $\psi_b(v) = 1/4 < 2/5$ in Example 2.
%
%
%
\subsubsection{The Nash Bargaining Solution}\label{sec:many-nash}
Recall that the Nash Bargaining Solution is defined as the unique bargaining solution that satisfies the axioms listed in Section~\ref{sec:one-nash}. 
%
\begin{theorem}
For $\langle N, v \rangle$ (General) let $F = \{(f_s,f_{r_1},$ $\dots$, $f_{r_n})$ $| f_s \ge 0$, $f_{r_i} \ge 0 \ (\forall \ i) \ \text{and} \ f_s + \sum_{i} f_{r_i} = (p + f(N)) \cdot \delta\}$ and $d = (d_s, d_{r_1}, \dots, d_{r_n}) = (p \cdot \delta, 0, \dots, 0).$ Then,
\begin{align*}
\phi_s(F,d) & = ( p + \frac{1}{n+1} f(N) ) \cdot \delta &&\text{and}\\
\phi_{r_i}(F,d) &= \frac{1}{n+1} f(N) \cdot \delta. &&\text{for all} \ i.
\end{align*}
\end{theorem}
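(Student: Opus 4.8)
The plan is to use the variational characterization of the Nash Bargaining Solution recalled earlier in the paper, namely $\phi(F,d) \in \mathrm{argmax}_{f \in F} \prod_i (f_i - d_i)$, and to solve the resulting constrained optimization explicitly. With the $F$ and $d = (p\delta, 0, \dots, 0)$ given in the statement, the objective becomes $g(f_s, f_{r_1}, \dots, f_{r_n}) = (f_s - p\delta)\prod_{i=1}^{n} f_{r_i}$, to be maximized subject to $f_s \ge 0$, $f_{r_i} \ge 0$ for all $i$, and $f_s + \sum_i f_{r_i} = (p + f(N))\cdot\delta$.

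First I would reparametrize by setting $u_s = f_s - p\delta$ and $u_i = f_{r_i}$, so that the equality constraint becomes $u_s + \sum_{i=1}^{n} u_i = f(N)\cdot\delta$ while the objective is just the product $u_s \prod_{i=1}^{n} u_i$ of $n+1$ variables whose sum is fixed. Assuming $f(N)\cdot\delta > 0$, I would observe that any feasible point at which $g$ is nonpositive — i.e. where $u_s \le 0$ or some $u_i = 0$ — cannot be a maximizer, since the interior point $u_s = u_1 = \dots = u_n = f(N)\delta/(n+1) > 0$ is feasible and already yields a strictly positive value; hence the maximum is attained in the open region where all $n+1$ factors are strictly positive. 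On that region the sum of the factors is constant, so by the AM--GM inequality (equivalently, a Lagrange-multiplier computation combined with strict concavity of $\log$) the product is maximized precisely when the $n+1$ factors are all equal, i.e. $u_s = u_1 = \dots = u_n = f(N)\delta/(n+1)$. Translating back yields $\phi_s(F,d) = (p + \tfrac{1}{n+1}f(N))\cdot\delta$ and $\phi_{r_i}(F,d) = \tfrac{1}{n+1}f(N)\cdot\delta$ for all $i$, and this vector indeed lies in $F$ because $\tfrac{1}{n+1}f(N)\delta \ge 0$. As a sanity check, $n=1$ recovers Theorem stated for $\langle\{s,r\},v\rangle$.

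I would then handle the degenerate case $f(N)\cdot\delta = 0$ separately: the constraint forces $u_s = -\sum_i u_i \le 0$, so $g \le 0$ with equality iff some factor vanishes, and the Pareto-efficient, individually rational point singled out by the axioms is $u_s = u_1 = \dots = u_n = 0$, which again matches the claimed formula $\phi_s = p\delta$, $\phi_{r_i} = 0$. Finally, the specializations for Linear and Threshold follow by substituting $f(N) = \sum_i q_i$ and $f(N) = q$ respectively.

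The main obstacle is not the algebra but the justification that the optimum is interior and global rather than merely a stationary point: one must rule out boundary maximizers (where the product is nonpositive) and then invoke AM--GM / log-concavity to pin down the unique maximizer on the positive orthant. Everything after that is a direct back-substitution.
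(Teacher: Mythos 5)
Your proposal is correct and follows the same route as the paper, which simply invokes the characterization $\phi(F,d) \in \mathrm{argmax}_{f \in F} \prod_i (f_i - d_i)$ and leaves the optimization implicit; you carry out that optimization explicitly via the substitution $u_s = f_s - p\delta$ and AM--GM, including the boundary and degenerate cases the paper glosses over. No gaps.
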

\begin{proof}
The claim follows from the fact that $\phi(F,d) \in \mbox{argmax}_{f \in F} \prod_i (f_i - d_i)$~\cite{M97}.
\end{proof}
One problem with the Nash Bargaining Solution is that it completely ignores the possibility of cooperation among subsets of players. To see that this may lead to ``unfair'' prices, consider the game $\langle N, v \rangle$ (Linear) with one seller $s$ and two recommenders $r_1$ and $r_2.$ Suppose that $q_1 = 1 - p - \epsilon$ and that $q_2 = \epsilon$ for some some small $\epsilon > 0$. It follows that $\phi_{r_1}(F,d) = \phi_{r_2}(F,d) = \frac{1-p}{2}$, i.e. the expected payoff to both recommenders is the same. But  since $r_1$'s contribution to the expected worth of the grand coalition is significantly higher than that of $r_2$ (especially as $\epsilon \to 0$), this cannot be regarded as ``fair''. We conclude that for $|N| > 2$ it is not advisable to use the Nash Bargaining Solution to guide the pricing of recommendations.
%
\section{Recommending Strategically}\label{sec:recommending-strategically}
We study the following problem: There are $n$ products. For each product the recommender has two options: ``recommend it'' or ``\emph{not} recommend it''. A recommendation is {\em successful} if the buyer buys the product. For a successful recommendation the recommender gets a constant reward of $r$ and this reward is the same for all products. Initially, the probability $p$ of success is $p_0 < 1$. With each unsuccessful recommendation this probability drops from its current value to $p=l\cdot p$, where $l<1$ is the {\em loss rate}. The probability $p$ can be seen as an estimate of the recommendee's \emph{trust} in the recommender and a high value of $l$ corresponds to a slow loss in trust. This basic model is analyzed in Section \ref{sec:wo_reset_wo_recovery}. We also consider extensions of this model where trust (= $p$) can increase again in two ways. First, we assume that $p$ is reset to $p=p_0$ on each successful recommendation. This setting we refer to as ``with reset'' and it is analyzed in Section \ref{sec:with_reset_wo_recovery}. Second, we introduce a factor $g\ge 1$ and each time the recommender does not recommend anything trust is regained and $p$ is updated to $p=\min(g\cdot p,p_0)$. This setting we refer to as ``with recovery'' when $g>1$ and it is analyzed in Section \ref{sec:with_reset_with_recovery}.

In all settings the recommender's sole goal is to maximize the overall expected reward $M_n(p_0,l,g)$ for the given parameters $p_0$, $l$ and $g.$ We are interested in the asymptotic behavior of $M_n(p_0,l,g)$, i.e. in $R(p_0,l,g) = \lim_{n \to \infty} M_n(p_0,l,g).$ 
Before looking at the theoretical analysis, the following section experimentally demonstrates the different behavior of the optimal total expected reward in these settings.
\subsection{Experimental Results}\label{sec:experiments}
Figure~\ref{fig:experiments} gives experimental results for $n = 200$, $r = 1$, $p_0 = 0.5$, $l = 0.66$, $g = 1$ (in the setting ``without recovery'') and $g=1.33$ (in the setting ``with recovery''). It shows that the expected reward of the optimal strategy converges in the setting ``without recovery'' and diverges in the setting ``with recovery''. In the setting ``without recovery'' the expected reward converges to $2.25$ if the probability of success is not reset and to $5$ if it is reset to $p_0$ on a single successful recommendation. The figure also shows that the expected reward of the heuristic ``recommend product $1$, $k+1$, $2k+1$, etc.'' converges for $k = 2$ where $l \cdot g^k < 1$ and diverges for $k = 3$ and $4$ where $l \cdot g^k > 1$. Finally, it shows that the expected reward grows faster for $k = 3$ than for $k = 4.$  
\begin{figure}[ht!] 
\centering
\epsfig{file=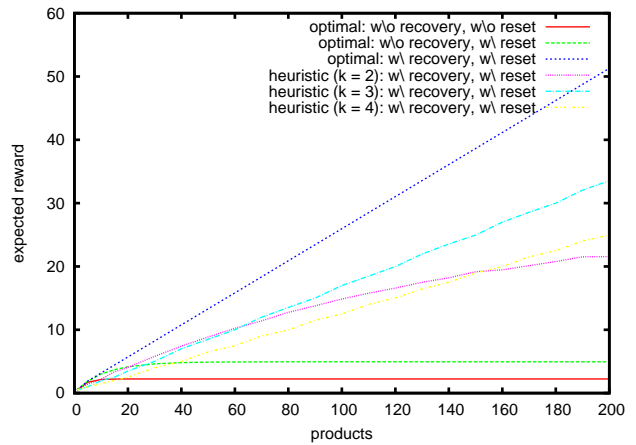,width=6cm,angle=270}
\caption{Without any recovery (red and green lines) the total expected reward converges. This also holds, even with recovery, for the ``aggressive'' heuristic (purple line) which recommends every second item. The other three settings, described in detail in Section~\ref{sec:experiments}, lead to an asymptotically unbounded reward.}\label{fig:experiments}
\end{figure}
\subsection{Without Reset, without Recovery}\label{sec:wo_reset_wo_recovery}
Here we consider the case where $g=1$ (= no recovery) and assume that the probability of success is {\em not} reset to $p_0$ on a successful recommendation. As the probability of success remains unchanged if no recommendation is given, the optimal strategy is to recommend {\em all} products. Therefore we can rewrite $R(p_0,l):=R(p_0,l,1)$ as follows: $R(p_0,l) = p_0 (r + R(p_0,l)) +  (1-p_0) R(lp_0,l)$, which we can solve analytically.
\begin{align*}
R(p_0,l) &=   \frac{p_0}{1-p_0} r + R(p_0 \cdot l,l)\\
         &= \frac{p_0}{1-p_0} r + \frac{p_0 \cdot l}{1-p_0} r + R(p_0 \cdot l^2,l)\\         
         &= \frac{p_0}{1-p_0} \cdot \sum_{i=0}^{\infty} l^i \cdot r =   \frac{p_0}{1-p_0} \cdot \frac{1}{1 - l} \cdot r < \infty.
\end{align*}

So, not surprisingly, if trust can only be lost and if both the initial trust $p_0$ and the loss rate $l$ are strictly smaller than $1$, then the total expected reward the recommender can achieve is finite, even when there is an infinite sequence of items to recommend.

\subsection{With Reset, without Recovery}\label{sec:with_reset_wo_recovery}
Now let us analyze the case where still $g = 1$ (= no recovery) but each successful recommendation leads to reset of $p$ to $p_0$. Again, the optimal strategy is to recommend \emph{all} products as there is no gain from not recommending. In this setting, we can rewrite $R(p_0,l)$ as follows: $R(p_0,l) = (1-q) \cdot (r + R(p_0,l))$, where $q$ denotes the probability that there will be not a single successful recommendation over the infinite sequence. This recurrence can be solved (i.e. $\lim_{n\rightarrow \infty} M_n(p_0,l)$ is finite) if and only if $q > 0.$ 
\begin{lemma}\label{lem:worecov}
Let $\mbox{dilog(x)} = \int_{x}^{1} \frac{\ln(t)}{1-t} \, dt$ and $c = \max(p_0,l)$. Then, for all $1 > p_0 \ge 0$,
$q  \geq (1-c)\mbox{exp}\left(\frac{\mbox{dilog}(1-c)}{\ln(c)}\right) > 0.$
\end{lemma}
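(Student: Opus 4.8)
The plan is to compute $q$ explicitly as an infinite product and then bound that product from below. First I would set up the recursion for $q$, the probability that no recommendation among the infinite sequence $1, 2, 3, \dots$ is ever successful, given that the recommender recommends every product (the optimal strategy here). Since after the $i$-th consecutive failure the success probability is $p_0 l^i$, the probability of the $(i+1)$-st recommendation also failing is $1 - p_0 l^i$, so
\begin{align*}
q = \prod_{i=0}^{\infty} (1 - p_0 l^i).
\end{align*}
This product converges to a strictly positive limit precisely because $\sum_i p_0 l^i < \infty$ (here using $l < 1$), which already gives $q > 0$; the work is to get the quantitative bound stated.

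Next I would bound each factor. Write $c = \max(p_0, l)$; then $p_0 l^i \le c \cdot c^i = c^{i+1}$ for all $i \ge 0$, so $1 - p_0 l^i \ge 1 - c^{i+1}$ and hence $q \ge \prod_{j=1}^{\infty}(1 - c^j)$. It remains to show $\prod_{j=1}^{\infty}(1-c^j) \ge (1-c)\exp\!\big(\operatorname{dilog}(1-c)/\ln c\big)$. Take logarithms: $\ln \prod_{j\ge 1}(1-c^j) = \sum_{j\ge 1} \ln(1 - c^j)$. The idea is to compare this sum with an integral. Since $t \mapsto \ln(1 - c^t)$ is increasing in $t$ for $c \in (0,1)$ (as $c^t$ decreases), we have $\sum_{j=1}^{\infty}\ln(1-c^j) \ge \ln(1-c) + \int_{1}^{\infty} \ln(1 - c^t)\,dt$, splitting off the $j=1$ term and bounding the tail $\sum_{j \ge 2}\ln(1-c^j) \ge \int_1^\infty \ln(1-c^t)\,dt$ by the integral comparison for increasing functions.

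Then I would evaluate the integral by the substitution $u = c^t$, so $t = \ln u / \ln c$ and $dt = du/(u \ln c)$; as $t$ runs from $1$ to $\infty$, $u$ runs from $c$ down to $0$, giving
\begin{align*}
\int_{1}^{\infty} \ln(1 - c^t)\,dt = \frac{1}{\ln c}\int_{0}^{c} \frac{\ln(1-u)}{u}\,du.
\end{align*}
Now substitute $w = 1 - u$ to rewrite $\int_0^c \frac{\ln(1-u)}{u}\,du = \int_{1-c}^{1}\frac{\ln w}{1 - w}\,dw = \operatorname{dilog}(1-c)$ by the definition given in the statement. Combining, $\ln q \ge \ln(1-c) + \operatorname{dilog}(1-c)/\ln c$, and exponentiating yields the claimed bound; strict positivity follows since each factor $1 - c^j$ is strictly positive and the sum of logs converges. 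The main obstacle is getting the integral-comparison inequality in the correct direction — one must check monotonicity of $\ln(1-c^t)$ and be careful about which endpoint term is split off so that the remaining tail sum dominates (rather than is dominated by) the integral; the substitution and identification with $\operatorname{dilog}$ are then routine.
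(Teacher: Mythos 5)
Your approach is the same as the paper's: write $q=\prod_{i\ge 0}(1-p_0 l^i)$, bound each factor below by $1-c^{i+1}$, take logarithms, compare the resulting sum with an integral using the monotonicity of $t\mapsto\ln(1-c^t)$, and evaluate that integral in closed form via the dilogarithm. The paper does exactly this, phrasing the comparison as $\ln(1-c^{k+1})\ge\int_{k-1}^{k}\ln(1-c^{x+1})\,dx$, which is your inequality reindexed. The one genuine issue is a sign in your change of variables: with $u=c^t$ the limits run from $u=c$ down to $u=0$, so $\int_1^\infty\ln(1-c^t)\,dt=-\frac{1}{\ln c}\int_0^c\frac{\ln(1-u)}{u}\,du$; as written, your right-hand side is positive while the left-hand side is manifestly negative. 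This slip happens to be cancelled by a sign inconsistency in the lemma statement itself: the definition $\mathrm{dilog}(x)=\int_x^1\frac{\ln t}{1-t}\,dt$ is the negative of the convention used in Lemma~\ref{lem:dilog_bounds} and in the paper's own evaluation (namely $\int_1^x$, under which $\mathrm{dilog}$ is nonnegative on $(0,1]$ and $\mathrm{dilog}(1-c)/\ln c\le 0$, so the stated lower bound is at most $1-c$, as it must be). With the substitution done correctly and the Lemma~\ref{lem:dilog_bounds} convention you land exactly on the stated bound; with the statement's literal definition the exponent should carry a minus sign. Fix that bookkeeping and the argument is complete; everything else (the product formula for $q$, the factorwise bound via $c$, the direction of the integral comparison, and the positivity conclusion) is correct.
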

\begin{proof}
The probability that there will be not a single successful recommendation is:
\begin{align*}
q &= \prod_{k=0}^{\infty} (1-l^k \cdot p_0)\ge \prod_{k=0}^{\infty} (1-c^{k+1}).
\end{align*}
Hence it suffices to show that $\prod_{k=0}^{\infty} (1-c^{k+1}) > 0$. Taking the $\ln(\ )$ of both sides we get
\begin{align*}
\ln\left( \prod_{k=0}^{\infty} (1-c^{k+1}) \right) &= \sum_{k=0}^{\infty} \ln (1-c^{k+1}) > - \infty,
\end{align*}
where we need to prove this inequality. Note that the expression $\ln (1-c^{k+1})$ is strictly increasing in $k$ and
hence $\ln (1-c^{k+1}) \ge \int_{k-1}^k \! \ln (1-c^{x+1}) \, dx$. This gives the bound
%
\begin{align*}
\sum_{k=0}^{\infty} \ln (1-c^{k+1})&= \ln (1-c) + \sum_{k=1}^{\infty} \ln (1-c^{k+1}) \\
&\ge \ln(1-c) + \sum_{k=1}^{\infty} \int_{x=k-1}^{k} \! \ln (1-c^{x+1}) \, dx \\
&= \ln(1-c) + \sum_{k=0}^{\infty} \int_{x=k}^{k+1} \! \ln (1-c^{x+1}) \, dx \\
&= \ln(1-c) + \int_{x=0}^{\infty} \! \ln (1-c^{x+1}) \, dx.
\end{align*}
%
Recall that $\mbox{dilog}(x) = \int_{x}^{1} \frac{\ln(t)}{1-t} \, dt.$ The indefinite integral 
of $\ln(1-x)$ is $-\mbox{dilog}(1-x)/\ln(x).$ We get
\begin{align*}
\int_{x=0}^{\infty} \! \ln (1-c^{k+1}) \, dx &= \frac{\mbox{dilog}(1 - c)}{\ln(c)} - \displaystyle{\lim_{x\to\infty}} \frac{\mbox{dilog}(1 - c^{x+1})}{\ln(c)}.
\end{align*}
Since $\mbox{dilog}(x)$ is continuous\footnote{It is even differentiable as it is defined as an indefinite integral.} and $\mbox{dilog}(1) = 0$ (see Lemma \ref{lem:dilog_bounds}), we get
\begin{align*}
\int_{x=0}^{\infty} \! \ln (1-c^{k+1}) \, dx &= \frac{\mbox{dilog}(1 - c)}{\ln(c)} - \frac{\mbox{dilog}(1)}{\ln(c)}\\
                                             &= \frac{\mbox{dilog}(1 - c)}{\ln(c)}.
\end{align*}
%
For $0 < x < 1$ we have $0 \le \mbox{dilog}(1-x) < 2e^{-1} + 1$ (see Lemma~\ref{lem:dilog_bounds}). For $0 < x < 1$ we have $\ln(x) < 0.$ It follows that $\int_{x=0}^{\infty} \! \ln (1-c^{k+1}) \, dx > - \infty$. \qed
\end{proof}
\begin{lemma}\label{lem:dilog_bounds}
Let $\mbox{dilog}(x) = \int_{1}^{x} \frac{\ln(t)}{1-t} \, dt.$ Then $\mbox{dilog}(x)$ is \hyphenation{mono-tonous-ly} monotonously decreasing and
$0 \le \mbox{dilog}(x) < 2e^{-1} + 1.$
\end{lemma}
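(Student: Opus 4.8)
The plan is to obtain both bounds from a single structural fact: $\mbox{dilog}$ is strictly decreasing on the interval $(0,1]$ on which it is used, so every value of $\mbox{dilog}$ there lies between $\mbox{dilog}(1)$ and $\lim_{x\to 0^+}\mbox{dilog}(x)$, after which it only remains to evaluate or bound these two endpoint values.

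For the monotonicity I would differentiate. By the fundamental theorem of calculus $\mbox{dilog}'(x)=\frac{\ln x}{1-x}$ for $x\ne 1$; the integrand $\frac{\ln t}{1-t}$ extends continuously to $-1$ at $t=1$, so there is no genuine singularity there, $\mbox{dilog}$ is well defined and $C^1$ on $(0,1]$, and the improper integral converges at $t=0$ because $\int_0^1(-\ln t)\,dt$ converges. For $x\in(0,1)$ we have $\ln x<0$ and $1-x>0$, hence $\mbox{dilog}'(x)<0$, so $\mbox{dilog}$ is strictly decreasing on $(0,1]$.

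The lower bound is then immediate: $\mbox{dilog}(1)=\int_1^1\frac{\ln t}{1-t}\,dt=0$, so $\mbox{dilog}(x)\ge 0$ for all $x\in(0,1]$. For the upper bound, monotonicity gives $\mbox{dilog}(x)\le\lim_{y\to 0^+}\mbox{dilog}(y)=\int_1^0\frac{\ln t}{1-t}\,dt=\int_0^1\frac{-\ln t}{1-t}\,dt$, so the whole lemma reduces to showing that this single constant is below $2e^{-1}+1$.

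Finally I would bound that constant. Expanding $\frac1{1-t}=\sum_{k\ge 0}t^k$ and integrating term by term --- legitimate by monotone convergence, since each $t^k(-\ln t)$ is nonnegative on $(0,1)$ --- together with the elementary identity $\int_0^1 t^k(-\ln t)\,dt=\frac1{(k+1)^2}$ gives $\int_0^1\frac{-\ln t}{1-t}\,dt=\sum_{m\ge 1}\frac1{m^2}=\frac{\pi^2}{6}\approx 1.645<1.736\approx 2e^{-1}+1$; since $\pi^2/6$ is the supremum of $\mbox{dilog}$ on $(0,1]$, this yields the strict bound $\mbox{dilog}(x)<2e^{-1}+1$ for every $x$ in the relevant range. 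If one would rather not quote $\sum 1/m^2=\pi^2/6$, the same conclusion is reachable elementarily: write $\frac1{1-t}=1+\frac{t}{1-t}$, use $\int_0^1(-\ln t)\,dt=1$, and bound $\int_0^1\frac{-t\ln t}{1-t}\,dt$ via the pointwise estimate $\frac{-t\ln t}{1-t}\le 1-\frac{1-t}{2}-\frac{(1-t)^2}{6}$, obtained by truncating $-t\ln t=(1-t)-\sum_{j\ge 2}\frac{(1-t)^j}{j(j-1)}$ (whose terms past the first are all negative) and dividing by $1-t$; this integrates over $[0,1]$ to $\frac{25}{36}$ and gives the cruder bound $\frac{61}{36}<2e^{-1}+1$ (equivalently $25e<72$). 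I expect the only real care to be in this last step --- verifying the numerical inequality and justifying the term-by-term integration (or the pointwise estimate); the rest is a one-line consequence of monotonicity.
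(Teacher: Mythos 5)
Your proof is correct, but it takes a genuinely different route from the paper's. The paper never differentiates $\mbox{dilog}$: it gets monotonicity from positivity of the integrand $f(t)=-\ln(t)/(1-t)$, and it obtains the constant $2e^{-1}+1$ by showing that $f$ itself is decreasing and splitting the integral at $t=e^{-1}$ --- the piece over $[e^{-1},1]$ is bounded by $(1-e^{-1})f(e^{-1})=1$, and the piece over $[x,e^{-1}]$ is bounded by comparing $f(t)$ with a multiple of $-\ln t$ and integrating explicitly, which is where the $2e^{-1}$ comes from. You instead identify the supremum $\lim_{x\to 0^+}\mbox{dilog}(x)=\int_0^1\frac{-\ln t}{1-t}\,dt$ and evaluate or bound that single constant: exactly as $\pi^2/6$ via $\int_0^1 t^k(-\ln t)\,dt=(k+1)^{-2}$ and the Basel sum, or elementarily as $61/36$ via the truncated expansion of $-t\ln t$ in powers of $1-t$; either is below $2e^{-1}+1$. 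The paper explicitly cites the tight bound $\pi^2/6$ as known and avoids it for self-containedness; your $61/36$ variant achieves the same self-containedness (it needs neither $\pi^2/6$ nor anything beyond $e<2.88$) while remaining fully rigorous. Your route is arguably safer as well: in the paper's own estimate, the step $\int_x^{e^{-1}}f(t)\,dt\le[t-t\ln t]_x^{e^{-1}}$ silently drops the factor $1/(1-e^{-1})$ from the preceding pointwise bound $f(t)\le -\ln(t)/(1-e^{-1})$, so as written the paper's argument only yields the weaker constant $2e^{-1}/(1-e^{-1})+1\approx 2.16$, whereas your computation does establish the stated $2e^{-1}+1$. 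All the individual steps in your write-up (the FTC differentiation, convergence of the improper integral at $0$, term-by-term integration by monotone convergence, and the numerical checks $\pi^2/6<1+2/e$ and $25e<72$) are sound.
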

In fact, the tight upper bound of $\mbox{dilog}(x)\le \pi^2/6 < 2e^{-1} + 1$ is known \cite{AAR01}, but we choose to give the following elementary proof of Lemma \ref{lem:dilog_bounds} to have a self-contained argument.
\begin{proof}
Let $f(t) = -\ln(t)/(1-t)$. Then $f'(t) = -(\frac{1}{t}(1-t)+\ln(t))/(1-t)^2 < 0$ for $0<t<1$.
So $\int_{t=x}^1 f(t) dt < \int_{t=x}^{e^{-1}} f(t) dt + (1-e^{-1}) f(e^{-1})$. For $0<t\le e^{-1}$ we also have
$f(t) \le -\ln(t)/(1-e^{-1})$. So, $\int_{t=x}^{e^{-1}} f(t) dt \le [t - t\cdot \ln(t)]^{e^{-1}}_{x}$. This is largest when $x\rightarrow 0$ where the whole expression becomes $2e^{-1}$ and so $\int_{t=x}^1 f(t) dt < 2e^{-1} + 1$ for $0\le x < 1$. Note that $f(t)$ is continuous at $t=1$ with $\lim_{x\rightarrow 1} f(t) = 1$ (using e.g. the l'Hopital Rule). So trivially $\mbox{dilog}(1) = 0$. As $f(t) > 0$ this gives the desired lower bound. 
\end{proof}
Using Lemma~\ref{lem:worecov} we can prove the following theorem.
\begin{theorem}\label{the:wo-recov}
Let $\mbox{dilog(x)} = \int_{x}^{1} \frac{\ln(t)}{1-t} \, dt$, $c = \max(p_0,l)$, and $\delta(c) = (1-c) \mbox{exp}(\mbox{dilog}(1-c)/\ln(c))$. Then, for all $1 > p_0 \ge 0$,
\begin{align*}
R(p_0,l) & \le \frac{1-\delta(c)}{\delta(c)} \cdot r < \infty.
\end{align*}
\end{theorem}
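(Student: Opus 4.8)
The plan is to reduce the theorem to the already-established lower bound on $q$ from Lemma~\ref{lem:worecov} via a one-line manipulation of the recurrence, so the only real content is monotonicity. First I would start from the recurrence stated just before Lemma~\ref{lem:worecov}, namely $R(p_0,l) = (1-q)\cdot(r + R(p_0,l))$, where $q = \prod_{k=0}^\infty(1-l^k p_0)$ is the probability that no recommendation in the infinite sequence ever succeeds. Rearranging, $R(p_0,l) - (1-q)R(p_0,l) = (1-q)r$, i.e. $q\,R(p_0,l) = (1-q)r$. Since Lemma~\ref{lem:worecov} guarantees $q > 0$, this can be solved to give the closed form
\begin{align*}
R(p_0,l) = \frac{1-q}{q}\,r = \Bigl(\frac1q - 1\Bigr) r.
\end{align*}
(This is exactly the ``if $q>0$'' case already flagged in the text; for $q=0$ the limit would be infinite, but Lemma~\ref{lem:worecov} rules this out.)

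Next I would observe that the map $q \mapsto (1/q - 1)\,r$ is strictly decreasing on $(0,1]$ because its derivative is $-r/q^2 < 0$. Hence any lower bound on $q$ translates directly into an upper bound on $R(p_0,l)$. Lemma~\ref{lem:worecov} supplies precisely such a bound: with $c = \max(p_0,l)$,
\begin{align*}
q \;\ge\; (1-c)\exp\!\Bigl(\tfrac{\mathrm{dilog}(1-c)}{\ln(c)}\Bigr) \;=\; \delta(c) \;>\; 0.
\end{align*}
Plugging $q \ge \delta(c)$ into the closed form and using monotonicity yields $R(p_0,l) \le (1/\delta(c) - 1)\,r = \frac{1-\delta(c)}{\delta(c)}\,r$, which is the claimed inequality. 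Finiteness of the right-hand side is immediate from $\delta(c) > 0$ (again part of Lemma~\ref{lem:worecov}); note also that, since $\delta(c)$ is a valid lower bound for the probability $q \le 1$, we automatically have $\delta(c) \le 1$ and the bound is nonnegative.

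Since the dilog estimate is the substantive ingredient, there is essentially no obstacle left at this stage: the work has already been done in Lemma~\ref{lem:worecov} (bounding the infinite product $\prod_k(1-c^{k+1})$ below by comparing $\sum_k \ln(1-c^{k+1})$ to an integral and evaluating it as a dilogarithm) and in Lemma~\ref{lem:dilog_bounds} (the boundedness $0 \le \mathrm{dilog}(1-x) < 2e^{-1}+1$ needed to keep that integral finite). If one wished to make the present proof self-contained, the only delicate point to reprove would be that the integral $\int_0^\infty \ln(1-c^{x+1})\,dx$ converges, which is where those dilog bounds enter; everything else here is elementary algebra and a derivative sign check.
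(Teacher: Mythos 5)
Your proof is correct and follows exactly the route the paper intends: the paper omits an explicit proof of this theorem, saying only ``Using Lemma~\ref{lem:worecov} we can prove the following theorem,'' and the intended argument is precisely your reduction — solve the recurrence $R(p_0,l)=(1-q)(r+R(p_0,l))$ to get $R=\frac{1-q}{q}r$, then apply the lower bound $q\ge\delta(c)>0$ from Lemma~\ref{lem:worecov} together with monotonicity in $q$. Nothing is missing.
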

This proves that even if the probability of success is reset to $p_0$ on a single successful recommendation, the total expected reward over an infinite period is bounded.
\subsection{With Reset, with Recovery}\label{sec:with_reset_with_recovery}
Finally, we consider the setting where $g > 1$ (= with recovery). Here the probability of success is set to $\min(p_0,g \cdot p)$ if no recommendation was given. Hence it might be better not to recommend all products to avoid that that probability $p$ converges to zero. Let $M_n(p_0,l,g)$ denote the expected reward of the optimal strategy. To obtain bounds for $M_n(p_0,l,g)$, let us consider, as a heuristic, the algorithm  $A^{(k)}$ that recommends product $1$, $k+1$, $2k+1$, etc. We write $A^{(k)}_n(p_0,l,g)$ to denote this algorithm's expected profit.
\begin{theorem}\label{the:w-recov-1}
Let $\psi$ be the smallest integer such that $l \cdot g^{\psi} \ge 1.$ If $k > \psi$, then, for all $1 > p_0, l > 0$ and $\infty > g \ge 1$, 
\begin{align*}
A^{(k)}_n(p_0,l,g) &= \lfloor \frac{n}{k} \rfloor \cdot p_0 \cdot r.
\end{align*}
\end{theorem}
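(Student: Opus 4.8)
The plan is to show that the heuristic $A^{(k)}$ issues \emph{every} one of its recommendations at success probability exactly $p_0$; the formula for $A^{(k)}_n(p_0,l,g)$ then drops out by linearity of expectation. The structural observation driving this is that consecutive recommendations of $A^{(k)}$ (at products $jk+1$ and $(j+1)k+1$) are separated by exactly $k-1$ rounds in which nothing is recommended, and when $k>\psi$ these $k-1$ ``recovery'' rounds always restore the trust parameter to $p_0$, no matter whether the preceding recommendation succeeded or failed.

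First I would record two elementary facts. The \emph{recovery fact}: starting from any value $p'\le p_0$ and applying the update $p\mapsto\min(g p,p_0)$ exactly $j$ times yields $\min(g^{j}p',p_0)$ --- a one-line induction, the only point being that once the value is clamped at $p_0$ it stays there since $g\ge 1$. The \emph{key inequality}: because $k$ and $\psi$ are integers with $k>\psi$, we have $k-1\ge\psi$, hence $g^{k-1}\ge g^{\psi}$ (as $g\ge 1$), and therefore $l\,g^{k-1}\ge l\,g^{\psi}\ge 1$ by the defining minimality of $\psi$. Consequently $g^{k-1}p_0\ge p_0$ and $l\,g^{k-1}p_0\ge p_0$.

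Now I would prove by induction on $m$ that the $m$-th recommendation of $A^{(k)}$ is made at probability $p_0$. The base case is immediate: the first recommendation is made at the initial value $p_0$ (or, depending on the convention for when $A^{(k)}$ starts, after some recovery rounds starting from $p_0$, which by the recovery fact leaves it at $p_0$). For the step, assume the $m$-th recommendation is made at $p_0$. If it succeeds, $p$ is reset to $p_0$; if it fails, $p$ becomes $l p_0$. Either way, the next recommendation is preceded by exactly $k-1$ non-recommendation rounds, so the recovery fact gives an entry probability of $\min(g^{k-1}p_0,p_0)$ respectively $\min(g^{k-1}l p_0,p_0)$, and both of these equal $p_0$ by the key inequality. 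Hence every recommendation is made at $p_0$.

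Finally, $A^{(k)}$ makes $\lfloor n/k\rfloor$ recommendations over the $n$ products, each contributing --- since it is made at probability $p_0$ --- an expected reward of $p_0\,r$, so by linearity of expectation $A^{(k)}_n(p_0,l,g)=\lfloor n/k\rfloor\cdot p_0\cdot r$. I expect the only real difficulty to be bookkeeping: counting the recommendations and the intervening recovery rounds precisely, handling the $\min$-clamping carefully in the recovery fact, and not overlooking that it is the integrality of $k$ and $\psi$ that upgrades $k>\psi$ to the crucial $k-1\ge\psi$.
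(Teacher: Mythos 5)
Your proposal is correct and follows essentially the same route as the paper's proof: the paper likewise observes that since $k>\psi$ implies $l\,g^{k-1}\ge 1$, each recommendation (after the worst case of a failure followed by $k-1$ recovery steps) is issued at probability $\min(p_0,\,l\,g^{k-1}p_0)=p_0$, and then sums the $\lfloor n/k\rfloor$ contributions of $p_0\,r$. Your version merely makes explicit the induction and the clamping behaviour of $p\mapsto\min(gp,p_0)$ that the paper leaves implicit.
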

\begin{proof}
The expected reward for the first recommendation is $p_0 \cdot r.$ Since $k > \psi$, the expected reward for every other recommendation is also $\min(p_0, p_0 \cdot l \cdot g^{k-1}) = p_0 \cdot r.$ Since there are exactly $\lfloor n/k \rfloor$ recommendations, this shows that $A^{(k)}_n(p_0,l,g) = \lfloor \frac{n}{k} \rfloor \cdot p_0 \cdot r.$ 
\end{proof}
This is instructive as it shows that (a) for $k > \psi$ the expected reward $A^{(k)}_n(p_0,l,g)$ of $A^{(k)}$ does {\em not} converge as $n$ tends to infinity and (b) for $k' > k > \psi$ the expected reward $A^{(k')}_n(p_0,l,g)$ of $A^{(k')}$ grows slower (and is ultimately lower) than the expected reward $A^{(k)}_n(p_0,l,g)$ of $A^{(k)}.$ 
Since the reward $M_n(p_0,l,g)$ of the optimal strategy is at least as high, this also shows non-convergence of $R(p_0,l,g) = \lim_{n \to \infty} M_n(p_0,l,g) = \infty.$
\begin{theorem}\label{the:w-recov-2}
Let $\psi$ be the smallest integer such that $l \cdot g^\psi \ge 1.$ If $k \le \psi$, then, for all $1 > p_0, l > 0$ and $\infty > g \ge 1$, there exist $p'_0$ and $l'$ such that
\begin{align*}
\lim_{n \rightarrow \infty} A^{(k)}_n(p_0,l,g) &\le \lim_{n \rightarrow \infty} M_n(p'_0,l') < \infty.
\end{align*}
\end{theorem}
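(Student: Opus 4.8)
The plan is to bound $A^{(k)}_n(p_0,l,g)$ from above by the expected reward of a suitable ``with reset, without recovery'' game and then invoke Theorem~\ref{the:wo-recov}. The key observation is that when $k \le \psi$, the recovery factor $g^{k-1}$ accumulated during the $k-1$ idle steps between consecutive recommendations is not enough to compensate the single loss factor $l$ incurred on an unsuccessful recommendation: we have $l \cdot g^{k-1} < 1$ (since $\psi$ is the \emph{smallest} integer with $l g^\psi \ge 1$, so $l g^{k-1} \le l g^{\psi-1} < 1$). Thus, between two recommendations of $A^{(k)}$ the trust level gets multiplied by at least $l$ (after a failure) and at most by $1$ back up to $p_0$ (after a success, via the reset); so if we set $l' = l \cdot g^{k-1}$ and $p'_0 = p_0$, then the sequence of success probabilities seen by $A^{(k)}$ at its recommendation epochs is \emph{pointwise dominated} by the sequence of success probabilities in the game $\langle p'_0, l' \rangle$ ``with reset, without recovery''. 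Indeed, a failure at a recommendation epoch multiplies $p$ by $l$ and then $k-1$ recoveries multiply it by at most $g^{k-1}$ (capped at $p_0$), i.e.\ by at most $l' = l g^{k-1} < 1$; a success resets to $p_0 = p'_0$ and then the caps keep it at $p'_0$. So at every recommendation epoch $A^{(k)}$'s success probability is at most the corresponding success probability in the reset/no-recovery game with parameters $(p'_0,l')$.

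The second step is a monotonicity/coupling argument: since the per-step reward is $r$ times the success probability and the evolution of $p$ is monotone in the current value of $p$ (a larger $p$ leads to a larger $p$ after either branch), a pointwise-larger success-probability process yields a pointwise-larger expected cumulative reward. Hence $A^{(k)}_n(p_0,l,g) \le M_n(p'_0,l')$ for every finite $n$, where $M_n(p'_0,l')$ is the optimal expected reward in the reset/no-recovery game — and in that game recommending everything is optimal, so $M_n(p'_0,l')$ is exactly the truncation of the recurrence $R(p'_0,l') = (1-q')(r + R(p'_0,l'))$ analyzed in Section~\ref{sec:with_reset_wo_recovery}. Taking $n \to \infty$ and applying Theorem~\ref{the:wo-recov} with $c' = \max(p'_0,l') < 1$ gives
\begin{align*}
\lim_{n\to\infty} A^{(k)}_n(p_0,l,g) \le \lim_{n\to\infty} M_n(p'_0,l') = R(p'_0,l') \le \frac{1-\delta(c')}{\delta(c')}\cdot r < \infty,
\end{align*}
which is the claimed bound.

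The main obstacle I anticipate is making the domination precise because of the $\min(g\cdot p, p_0)$ cap: the recovery is not a clean multiplication by $g$ but a clipped one, so one has to check that clipping only \emph{helps} the domination (it can only decrease $p$ relative to the uncapped $g^{k-1}$ multiplication, hence relative to multiplying by $l' = l g^{k-1}$ after a failure, and it exactly achieves $p'_0$ after a success). A second, more technical point is that $A^{(k)}$ over a horizon of $n$ products makes $\lfloor n/k\rfloor$ recommendations, so the coupling should be set up between ``$A^{(k)}$ after $j$ recommendations'' and ``the reset/no-recovery game after $j$ steps''; this reindexing is routine. One must also handle the boundary cases $l g^{k-1}$ possibly equal to something still $<1$ but close to it — harmless, since Theorem~\ref{the:wo-recov} only needs $c' < 1$, which holds as $p'_0 = p_0 < 1$ and $l' = l g^{k-1} < 1$.
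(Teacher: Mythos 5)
Your proposal is correct and follows essentially the same route as the paper: reduce the $A^{(k)}$ process to a ``with reset, without recovery'' game with parameters $p'_0 = p_0$ and $l' = l\cdot g^{k-1} < 1$, and then invoke Theorem~\ref{the:wo-recov}. The paper simply asserts the two problems are \emph{equivalent}, while you additionally verify that the cap $\min(g\cdot p, p_0)$ never spoils the correspondence (indeed it never binds after a failure since $l g^{k-1} < 1$, and exactly reproduces the reset after a success), which is a welcome extra precision but not a different argument.
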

\begin{proof}
If $k < \psi$, then the profit maximization problem with parameters $p_0$, $l$, and $g$ on the products $1$, $2$, $3$, etc. is equivalent to the profit maximization problem with parameters $p'_0 = p_0$, $l' = l \cdot g^{k-1} < 1$, and $g' = 1$ on the products $1$, $k+1$, $2k+1$, etc. The claim follows from Theorem~\ref{the:wo-recov}.
\end{proof}
Whereas Theorem~\ref{the:w-recov-1} shows that recommending too seldomly is sub-optimal, Theorem~\ref{the:w-recov-2} shows that recommending too often is even worse.
\section{Discussion and Future Work}

Suppose you recommend a product to a friend and the seller of the product offers 
to pay you for your recommendation. What would be a ``good'' price? Our first
finding was that the only ``truthful'' price would be zero. The problem with this,
however, is that if you do not get paid, then you might as well decide {\em not} to 
recommend the product. And so the seller might be willing to pay you a ``fair'' 
price. We approached the problem of finding ``fair'' prices by studying solution
concepts from coalitional game theory such as the Core, the Shapley value, and the
Nash Bargaining Solution. Since each of these solution concepts formalizes some notion 
of ``fairness'', these prices can be regarded as {\em provably} ``fair''. We view such 
an ``axiomatic'' foundation of ``fairness'' to be the only viable basis for truely 
``fair'' prices in practice.

Now suppose that you get paid for each succesful recommendation you make, and that 
you want to maximize the amount of money paid to you. At first sight, it might appear 
that the best strategy for you is to send out as many recommendations to as many 
friends as possible. But, then, just as you get ``blind'' when being shown too many 
ads, your friends will probably start to ignore your ``recommendations''. We adressed 
this problem by modeling the loss in ``trust'' by a drop in ``purchase probability'' 
on each unsuccesful recommendation. Our main finding here was that, even if the ``trust'' 
in you is reset to the initial level on a single successful recommendation, the total 
expected profit you can make over an infinite period of time is bounded. This can only 
be overcome if the recomendee also incrementally regains ``trust'' over periods without 
any recommendation. 


We believe that our work motivates a number of interesting research questions. E.g.: What 
are ``good'' pricing mechanisms in settings where the seller has objectives such as maintaining 
the buyer's ``trust''? How exactly do web users respond to being shown irrelevant advertisements? 
Is it possible to revive their interest in banner ads? What are ``optimal'' auction mechanisms 
for sponsored search when the click-through-rates are non-constant and decay with each irrelevant 
advertisement being shown?


\bibliographystyle{abbrv}
\bibliography{pricing_recommendations}

\begin{thebibliography}{10}

\bibitem{AMT07}
Z.~Abrams, O.~Mendelevitch, and J.~Tomlin.
\newblock Optimal delivery of sponsored search advertisements subject to budget
  constraints.
\newblock In {\em Conference on Electronic commerce (EC'07)}, pages 272--278,
  2007.

\bibitem{AAR01}
G.~E. Andrews, R.~Askey, and R.~Roy.
\newblock {\em Special functions}.
\newblock Cambridge University Press, 2001.

\bibitem{AMSX09}
D.~Arthur, M.~Motwani, A.~Sharma, and Y.~Xu.
\newblock Pricing strategies for viral marketing on social networks.
\newblock In {\em Workshop on Internet and Network Economics (WINE'09)}, page
  to appear, 2009.

\bibitem{BL98}
J.~P. Benway and D.~M. Lane.
\newblock Banner blindness: Web searchers often miss ``obvious'' links.
\newblock {\em ITG Newsletter}, 1(3), 1998.
\newblock
  \url{http://www.internettg.org/newsletter/dec98/banner_blindness.html}.

\bibitem{BO06}
D.~Bergemann and D.~Ozmen.
\newblock Optimal pricing with recommender systems.
\newblock In {\em Conference on Electronic commerce (EC'06)}, pages 43--51,
  2006.

\bibitem{B63}
O.~N. Bondareva.
\newblock Some applications of linear programming methods to the theory of
  cooperative games.
\newblock {\em Problemy Kybernetiki}, 10:119--139, 1963.

\bibitem{BR87}
J.~J. Brown and P.~H. Reingen.
\newblock Social ties and word-of-mouth referral behavior.
\newblock {\em Journal of Consumer Research: An Interdisciplinary Quarterly},
  14(3):350--62, 1987.

\bibitem{BHNG05}
M.~Burke, A.~Hornof, E.~Nilsen, and N.~Gorman.
\newblock High-cost banner blindness: Ads increase perceived workload, hinder
  visual search, and are forgotten.
\newblock {\em ACM Transactions on Computer-Human Interactaction},
  12(4):423--445, 2005.

\bibitem{CHN00}
P.~Chatterjee, D.~L. Hoffman, and T.~P. Novak.
\newblock Modeling the clickstream: Implications for web-based advertising
  efforts.
\newblock {\em Marketing Science}, 22:520--541, 2000.

\bibitem{DR01}
P.~Domingos and M.~Richardson.
\newblock Mining the network value of customers.
\newblock In {\em SIGKDD international conference on Knowledge discovery and
  data mining (KDD'01)}, pages 57--66, 2001.

\bibitem{FMPS07}
J.~Feldman, S.~Muthukrishnan, M.~Pal, and C.~Stein.
\newblock Budget optimization in search-based advertising auctions.
\newblock In {\em Conference on Electronic commerce (EC'07)}, pages 40--49,
  2007.

\bibitem{FN09}
J.~Fernandez and B.~Nahata.
\newblock Pay what you like.
\newblock Technical Report 16265, Munich Personal RePEc Archive, 2009.

\bibitem{FV08}
Friend vouch, 2008.
\newblock \url{http://www.friendvouch.com}.

\bibitem{GEB04}
A.~C.~B. Garcia, M.~Ekstrom, and H.~Bj\"{o}rnsson.
\newblock Hyriwyg: leveraging personalization to elicit honest recommendations.
\newblock In {\em Conference on Electronic commerce (EC'04)}, pages 232--233,
  2004.

\bibitem{G59}
D.~Gillies.
\newblock {\em Contributions to the Theory of Games IV}, chapter Solutions to
  general non-zero-sum games, pages 47--–85.
\newblock Princeton University Press, 1959.

\bibitem{GCD03}
R.~Grewal, T.~W. Cline, and A.~Davies.
\newblock Early-entrant advantage, word-of-mouth communication, brand
  similarity, and the consumer decision-making process.
\newblock {\em Journal of Consumer Psychology}, 13(3):187--197, 2003.

\bibitem{HKTR04}
J.~L. Herlocker, J.~A. Konstan, L.~G. Terveen, and J.~T. Riedl.
\newblock Evaluating collaborative filtering recommender systems.
\newblock {\em ACM Transactions on Information Systems}, 22(1):5--53, 2004.

\bibitem{HKK91}
P.~M. Herr, F.~R. Kardes, and J.~Kim.
\newblock Effects of word-of-mouth and product-attribute information on
  persuasion: An accessibility-diagnosticity perspective.
\newblock {\em Journal of Consumer Research}, 17(4):454--462, 1991.

\bibitem{KKT03}
D.~Kempe, J.~Kleinberg, and E.~Tardos.
\newblock Maximizing the spread of influence through a social network.
\newblock {\em KDD}, pages 137--146, 2003.

\bibitem{L06}
S.~Lahaie.
\newblock An analysis of alternative slot auction designs for sponsored search.
\newblock In {\em Conference on Electronic commerce (EC'06)}, pages 218--227,
  2006.

\bibitem{LPSV07}
S.~Lahaie, D.~Pennock, A.~Saberi, and R.~Vohra.
\newblock {\em Algorithmic Game Theory}, chapter Sponsored Search Auctions,
  pages 699--716.
\newblock Cambridge University Press, 2007.

\bibitem{LAH06}
J.~Leskovec, L.~A. Adamic, and B.~A. Huberman.
\newblock The dynamics of viral marketing.
\newblock In {\em Conference on Electronic commerce (EC'06)}, pages 228--237,
  2006.

\bibitem{M08b}
A.~Mantzaris.
\newblock Pay-what-you-like restaurants, 2008.
\newblock
  \url{http://www.budgettravel.com/bt-dyn/content/article/2008/02/29/AR2008022%
902761.html}.

\bibitem{M08a}
S.~Maxwell.
\newblock {\em The Price is Wrong: Understanding What Makes a Price Seem Fair
  and the True Cost of Unfair Pricing}.
\newblock John Wiley and Sons, 2008.

\bibitem{M91}
H.~Moulin.
\newblock {\em Axioms of Cooperative Decision Making (Econometric Society
  Monographs)}.
\newblock Cambridge University Press, July 1991.

\bibitem{M97}
R.~B. Myerson.
\newblock {\em Game Theory: Analysis of Conflict}.
\newblock Harvard University Press, 1997.

\bibitem{N50}
J.~Nash, J.~F.
\newblock The bargaining problem.
\newblock {\em Econometrica}, 18(2):155--162, 1950.

\bibitem{OCS08}
N.~Ohta, V.~Conitzer, Y.~Satoh, A.~Iwasaki, and M.~Yokoo.
\newblock Anonymity-proof shapley value: Extending shapley value for
  coalitional games in open environments.
\newblock {\em Autonomous Agents and Multiagent Systems}, pages 927--934, 2008.

\bibitem{OR94}
M.~J. Osborne and A.~Rubinstein.
\newblock {\em A Course in Game Theory}.
\newblock The MIT Press, 1994.

\bibitem{PK08}
K.~Patel and R.~G.~P. Kantamneni.
\newblock Monetizing low value clickers.
\newblock United States Patent Application 20080249854, 2008.

\bibitem{RFBS84}
P.~H. Reingen, B.~L. Foster, J.~J. Brown, and S.~B. Seidman.
\newblock Brand congruence in interpersonal relations: A social network
  analysis.
\newblock {\em Journal of Consumer Research}, 11(3):771--783, 1984.

\bibitem{R04}
J.~J. Rotemberg.
\newblock Fair pricing.
\newblock Technical Report 10915, National Bureau of Economic Research, 2004.

\bibitem{S53}
L.~S. Shapley.
\newblock {\em Contributions to the Theory of Games II}, chapter A Value for
  n-person Games, pages 307--–317.
\newblock Princeton University Press, 1953.

\bibitem{S67}
L.~S. Shapley.
\newblock On balanced sets and cores.
\newblock {\em Naval Research Logistics Quarterly}, 14:453--460, 1967.

\bibitem{SM95}
U.~Shardanand and P.~Maes.
\newblock Social information filtering: algorithms for automating ``word of
  mouth''.
\newblock In {\em SIGCHI conference on Human factors in computing systems
  (CHI'95)}, pages 210--217, 1995.

\bibitem{SW09}
A.~Singla and I.~Weber.
\newblock Camera brand congruence in the flickr social graph.
\newblock In {\em Conference on Web Search and Data Mining (WSDM'09)}, pages
  252--261, 2009.

\end{thebibliography}

\end{document}